\keywords{Implicit computational complexity, basic feasible functionals}
\newcommand{\bff}{{\sc bff}}
\newcommand{\ocase}{{\tt case }}
\newcommand{\oletrec}{{\tt letRec }}
\newcommand{\M}{{\tt M}}
\newcommand{\N}{{\tt N}}
\newcommand{\D}{{\tt D}} 
\newcommand{\Nat}{{\tt Nat}}
\newcommand{\bN}{\mathbb{N}}
\newcommand{\cN}{\overline{\mathbb{N}}}
\newcommand{\B}{\textbf{B}}
\newcommand{\conone}{{\tt c}}
\newcommand{\comp}{{\tt comp}}
\newcommand{\icmp}{{\tt comi}}
\newcommand{\pcmp}{{\tt comp}}
\newcommand{\ecmp}{{\tt come}}
\newcommand{\letrec}[2]{{\tt letRec }\ #1 = #2}
\newcommand{\funone}{{\tt f}}
\newcommand{\varone}{{\tt x}}
\newcommand{\x}{\varone}
\newcommand{\y}{{\tt y}}
\newcommand{\z}{{\tt z}}
\newcommand{\vone}{{\tt v}}
\newcommand{\nil}{{\tt nil}}
\newcommand{\op}{{\tt op}}
\newcommand{\ckbd}{{\tt chkbd}}
\newcommand{\cut}{{\tt cut}}
\newcommand{\lcase}[3]{{\tt case }\ #1 \ {\tt  of }\  {#2}_1(\vet{\varone_1}) \to  {#3}_1, ..., {#2}_n(\vet{\varone_n}) \to {#3}_n}
\newcommand{\ccase}[1]{{\tt case }\ #1 \ {\tt  of }}
\newcommand{\some}[3]{#1_{#2}, \cdots, #1_{#3}}
\newcommand{\many}[2]{\some{#1}{1}{#2}}
\newcommand{\Var}{\mathcal{X}}
\newcommand{\vet}[1]{\overrightarrow{#1}}
\newcommand{\Cns}{\mathcal{C}}
\newcommand{\Constructors}{\Cns}
\newcommand{\suc}{ {\tt +1}}
\newcommand{\T}{{\tt T}}
\newcommand{\ord}{{\tt ord}}
\newcommand{\funtwo}{\texttt{g}}
\newcommand{\interp}[1]{\llparenthesis #1 \rrparenthesis}
\newcommand{\interpone}[1]{\llparenthesis #1 \rrparenthesis_{\rho}}
\newcommand{\transform}[2]{\lbrbrak #1 \rbrbrak_{#2}}
\newcommand{\interpcontext}[2]{\llparenthesis #1 \rrparenthesis_{#2}}
\newcommand{\taille}[1]{|#1|}
\newcommand{\FP}{\mbox{\sc fp}}
\newcommand{\sem}[1]{\llbracket #1 \rrbracket}
\newcommand{\itlp}{\text IBTLP}
\newcommand{\FPtime}{{\sc fptime }}
\begin{document}
\title[Theory of higher order interpretations and application to BFF]{Theory of higher order interpretations and application to Basic Feasible Functions}

\author[E.~Hainry]{Emmanuel Hainry}	
\address{Université de Lorraine, CNRS, Inria, LORIA, F-54000 Nancy, France}	
\email{\{emmanuel.hainry,romain.pechoux\}@loria.fr}  
\author[R.~Péchoux]{Romain Péchoux}	
\thanks{This work has been partially supported by ANR Project ELICA ANR-14-CE25-0005 and Inria associate team TC(Pro)$^3$.}

\begin{abstract} 
Interpretation methods and their restrictions to polynomials have been deeply used to control the termination and complexity of first-order term rewrite systems. This paper extends interpretation methods to a pure higher order functional language. We develop a theory of higher order functions that is well-suited for the complexity analysis of this programming language. The interpretation domain is a complete lattice and, consequently, we express program interpretation in terms of a least fixpoint. As an application, by bounding interpretations by higher order polynomials, we characterize Basic Feasible Functions at any order.
\end{abstract} 

\maketitle

\section{Introduction}
\subsection{Higher order interpretations}
This paper introduces a theory of higher order interpretations for studying higher order complexity classes. These interpretations are an extension of usual (polynomial) interpretation methods introduced in~\cite{MN70,L79} and used to show the termination of (first order) term rewrite systems~\cite{CMPU05,CL92} or to study their complexity~\cite{BMM11}. 

This theory is a novel and uniform extension to higher order functional programs: the definition works at any order on a simple programming language, where interpretations can be elegantly expressed in terms of a least fixpoint, and no extra constraints are required.

 The language has only one semantics restriction: its reduction strategy is enforced to be leftmost outermost as interpretations are non decreasing functions.  Similarly to first order interpretations, higher order interpretations ensure that each reduction step corresponds to a strict decrease. Consequently, some of the system properties could be lost if a reduction occurs under a context. 
 
\subsection{Application to higher order polynomial time}
As Church-Turing's thesis does not hold at higher order, distinct and mostly pairwise incomparable complexity classes are candidates as a natural equivalent of the notion of polynomial time computation for higher order.

The class of polynomially computable real functions by Ko~\cite{Ko91} and  the class of Basic Feasible Functional at order $i$ (\bff$_i$) by Irwin, Kapron and Royer~\cite{IKR02} belong to the most popular definitions for such classes. In~\cite{Ko91} polynomially computable real functions are defined in terms of first order functions over real numbers. They consist in order 2 functions over natural numbers and an extension at any order is proposed by Kawamura and Cook in~\cite{KC10stoc}. The main distinctions between these two models are the following:
\begin{itemize}
\item the book~\cite{Ko91} deals with representation of real numbers as input while the paper~\cite{IKR02} deals with general functions as input,
\item the book~\cite{Ko91} deals with the number of steps needed to produce an output at a given precision while the paper~\cite{IKR02} deals with the number of reduction steps needed to evaluate the program.
\end{itemize}

Moreover, it was shown in~\cite{IKR02} and~\cite{Feree14} that the classes \bff$_i$ cannot capture some functions that could be naturally considered to be polynomial time computable because they do not take into account the size of their higher order arguments. However they have been demonstrated to be robust, they characterize exactly the well-known class \FPtime of polynomial time computable functions as well as the well-known class of Basic Feasible Functions \bff, that corresponds to  order 2 polynomial time computations, and have already been characterized in various ways, \textsl{e.g.}~\cite{CK89}. 

The current paper provides a characterization of the \bff$_i$ classes as they deal with discrete data as input and they are consequently more suited to be studied with respect to usual functional languages. This result was expected to hold as it is known for a long time that (first order) polynomial interpretations characterize \FPtime and as it is shown in~\cite{FHHP15} that (first order) polynomial interpretations on stream programs characterize \bff.

\subsection{Related works}
The present paper is an extended version of the results in~\cite{HP17}: more proofs and examples have been provided. An erratum has been provided: the interpretation of the case construct has been slightly modified so that we can consider non decreasing functions (and not only strictly increasing functions).

There are two lines of work that are related to our approach. In~\cite{VDP93}, Van de Pol introduced higher order interpretation for showing the termination of higher order term rewrite systems. In~\cite{BaillotL12,BL16}, Baillot and Dal Lago introduce higher order interpretations for complexity analysis of term rewrite systems. While the first work only deals with termination properties, the second work is restricted to a family of higher order term rewrite systems called simply typed term rewrite systems. Our work can be viewed as an extension of~\cite{BL16} to functional programs and polynomial complexity at any order.

\subsection{Outline}
In Section 2, the syntax and semantics of the functional language are introduced. The new notion of higher order interpretation and its properties are described in Section 3. Next, in Section 4, we briefly recall the \bff$_i$ classes and their main characterizations, including a characterization based on the BTLP programming language of~\cite{IKR02}. Section 5 is devoted to the characterization of these classes using higher order polynomials. The soundness relies on interpretation properties: the reduction length is bounded by the interpretation of the initial term. The completeness is demonstrated by simulating a  BTLP procedure: compiling procedures to terms after applying some program transformations. 
In Section 6, we briefly discuss the open issues and future works related to higher order interpretations.

\section{Functional language}
\subsection{Syntax.}
The considered programming language consists in an unpure lambda calculus with constructors, primitive operators, a $\ocase$ construct for pattern matching and a $\oletrec$ instruction for  function definitions that can be recursive. It is as an extension of PCF~\cite{Mit96} to inductive data types and it enjoys the same properties (confluence and completeness with respect to partial recursive functions for example). 

The set of terms $\mathcal{T}$ of the language is generated by the following grammar:
$$ \M ,\N::= \x \  | \  \conone  \  | \  \op \  | \ \lcase{\M}{\conone}{\M } \  | \ \M\ \N \   | \  \lambda \x.\M \  | \  \letrec{\funone}{\M} ,$$

where $\conone,\many{\conone}{n}$ are constructor symbols of fixed arity and $\op$ is an operator of fixed arity. Given a constructor or operator symbol $b$, we write $ar(b)=n$ whenever $b$ is of arity $n$. 
$\x,\funone$ are variables in $\Var$ and $\vet{\varone_i}$ is a sequence of $ar(\conone_i)$ variables.

The free variables $FV(\M)$ of a term $\M$ are defined as usual.
Bounded variables are assumed to have distinct names in order to avoid name clashes. A closed term is a term $\M$ with no free variables, $FV(\M)=\emptyset$.\\
A substitution $\{\N_1/\varone_1,\cdots,\N_n/\varone_n \}$ is a partial function mapping variables $\many{\varone}{n}$ to terms $\many{\N}{n}$. The result of applying the substitution $\{\N_1/\varone_1,\cdots,\N_n/\varone_n \}$ to a term $\M$ is noted $\M \{\N_1/\varone_1,\cdots,\N_n/\varone_n \}$ or $\M \{\vet{\N}/\vet{\varone}\}$ when the substituting terms are clear from the context.\\

\subsection{Semantics.}\label{ss:sem}
Each  primitive operator $\op$ has a corresponding semantics $\sem{\op}$ fixed by the language implementation. $\sem{\op}$ is a total function from $\mathcal{T}^{ar(\op)}$ to $\mathcal{T}$.\footnote{Operators are total functions over terms and are not only defined on \emph{``values''}, \textsl{i.e.} terms of the shape $\lambda \x.\M$, so that we never need to reduce the operands in rule $\to_\op$. This will allow us to consider non decreasing operator interpretations in Definition~\ref{def:inter} instead of strictly increasing operator interpretation. }

We define the following relations between two terms of the language:
\begin{itemize}
\item $\beta$-reduction: $\lambda \x. \M\ \N \to_\beta \M\{\N/\x\},$
\item pattern matching: $\texttt{case}\ {\conone_j(\vet{\N_j})} \ \texttt{of} \ \ldots  \conone_j(\vet{\x_j}) \to \M_j   \ldots \to_\ocase \M_j \{\vet{\N_j}/\vet{\varone_j}\}, $
\item operator evaluation: $\op\ \M_1 \ldots \M_n \to_\op \sem{\op}(\M_1,\ldots,\M_n),$
\item fixpoint evaluation: $\letrec{\funone}{\M} \to_\oletrec \M\{ \letrec{\funone}{\M}/\funone\}.$
\end{itemize}

Let $\to_\alpha$ be defined as $\cup_{r \in \{\beta , \ocase ,\oletrec , \op \}} \to_r$. 
Let $\Rightarrow^k$ be the leftmost outermost (normal-order) evaluation strategy defined with respect to $\to_{\alpha}$ in Figure~\ref{fig:lmom}. The index $k$ accounts for the number of $\to_\alpha$ steps fired during a reduction. Let $\Rightarrow$ be a shorthand notation for $\Rightarrow^1$.
\begin{figure*}[!t]
\centering
\hrule
\vspace*{0.3cm}
\begin{tabular}{c}
$ 

\infer[]{\texttt{case}\ {\M} \ \texttt{of} \ \ldots \Rightarrow^k \texttt{case}\ {\N} \ \texttt{of} \ \ldots}{\M \Rightarrow^k \N}
\qquad
\infer[]{\M\ \N \Rightarrow^k \M'\ \N}{\M \Rightarrow^k \M'}
\qquad
\infer[]{\M \Rightarrow^1 \N}{\M \to_\alpha \N}
\qquad
\infer[]{\M \Rightarrow^{k+k'} \N}{\M\  \Rightarrow^k \M'\ \quad \M'\  \Rightarrow^{k'} \N}
$
\\[0.3cm]
\end{tabular}
\hrule
\caption{Evaluation strategy}
\label{fig:lmom}
\end{figure*}

Let $\taille{\M \Rightarrow^k \N}$ be the number of reductions distinct from $\to_\op$ in a given a derivation $\M \Rightarrow^k \N$. $\taille{\M \Rightarrow^k \N} \leq k$ always holds. 
$\sem{\M}$ is a notation for the term computed by $\M$ (if it exists), \textsl{i.e.} $\exists k, \M \Rightarrow^k \sem{\M}$ and $\not\exists \N, \sem{\M} \Rightarrow \N$. 

A (first order) value $\vone$ is defined inductively by either $\vone= \conone$, if $ar(\conone)=0$, or $\vone= \conone\ \vet{\vone}$, for $ar(\conone)>0$ values $\vet{\vone}$, otherwise.

\subsection{Type system.}

Let $\B$ be a set of basic inductive types $\texttt{b}$ described by their constructor symbol sets $\Constructors_\texttt{b}$. 
The set of simple types is defined by: $$\T ::= \texttt{b} \  |\ \T \longrightarrow \T,  \quad \text{with } \texttt{b} \in \B.$$ As usual $\longrightarrow$ associates to the right.

\begin{exa}
The type of unary numbers $\Nat$ can be described by $\Constructors_\Nat=\{0,\suc\}$, $0$ being a constructor symbol of $0$-arity and $\suc$ being a constructor symbol of $1$-arity. 

For any type $T$, $[T]$ is the base type for lists of elements of type $\T$ and has constructor symbol set $\mathcal{C}_{[\T]}=\{\nil,\conone\}$, $\nil$ being a constructor symbol of $0$-arity and $\conone$ being a constructor symbol of $2$-arity. 
\end{exa}

The type system is described in Figure~\ref{type} and proves judgments of the shape $\Gamma ; \Delta \vdash \M :: \T$ meaning that the term $\M$ has type $\T$ under the variable and constructor symbol contexts $\Gamma$ and $ \Delta$ respectively ; a variable (a constructor, respectively) context being a partial function that assigns types to variables (constructors and operators, respectively). 

As usual, the input type and output type of constructors and operators of arity $n$ will be restricted to basic types. Consequently, their types are of the shape $\texttt{b}_1 \longrightarrow \ldots  \longrightarrow \texttt{b}_n  \longrightarrow \texttt{b}$. A well-typed term will consist in a term $\M$ such that $\emptyset ; \Delta \vdash \M :: \T$. Consequently, it is mandatory for a term to be closed in order to be well-typed.

In what follows, we will consider only well-typed terms.
The type system assigns types to all the syntactic constructions of the language and ensures that a program does not go wrong.
Notice that the typing discipline does not prevent a program from diverging.

\begin{figure*}[!t]
\centering
\hrule
\vspace*{0.3cm}
\begin{tabular}{c}
$ 
\infer[(\texttt{Var})]{\Gamma;\Delta\vdash \varone::\T}{\Gamma(\varone)=\T}
\qquad
\infer[(\texttt{Cons})]{\Gamma;\Delta\vdash \conone::\T}{\Delta(\conone)=\T}
\qquad
\infer[(\texttt{Op})]{\Gamma;\Delta\vdash \op::\T}{\Delta(\op)=\T}
$
\\
\\
$
\infer[(\texttt{App})]{\Gamma;\Delta\vdash \M\ \N:: \T_2 
}{\Gamma;\Delta\vdash \M::\T_1 \longrightarrow \T_2 \quad \Gamma;\Delta\vdash \N::\T_1
}$
\\
\\
$
\infer[(\texttt{Abs})]{\Gamma;\Delta\vdash \lambda \x.\M::\T_1 \to \T_2 
}{\Gamma, \x::\T_1;\Delta\vdash \M:: \T_2 
}
\qquad
 \infer[(\texttt{Let})]{\Gamma;\Delta\vdash \letrec{\funone}{\M}\ ::\T }{
\Gamma , \funone:: \T ;\Delta \vdash \M::\T
}
$
\\
\\
$
\infer[(\texttt{Case})]{\Gamma;\Delta\vdash\lcase{\M}{\conone}{\M}::\T}{
\Gamma;\Delta\vdash\M::\texttt{b} & &\Gamma;\Delta \vdash\conone_i::\vet{\texttt{b}_i} \longrightarrow \texttt{b} &\Gamma, \vet{\varone_i}::\vet{\texttt{b}_i};\Delta\vdash\M_i:: \T \ (1\leq i\leq m) }
$
\\[0.3cm]
\end{tabular}
\hrule
\caption{Type system}
\label{type}
\end{figure*}

\begin{defi}[Order]\label{def:order}
The order of a type $\T$, noted $\ord(\T)$, is defined inductively by:
\begin{align*}
\ord(\texttt{b}) &=0, &&\text{if }\texttt{b} \in \B , \\
\ord(\T \longrightarrow \T')&= \max(\ord(\T)+1,\ord(\T')),&&\text{otherwise.}
\end{align*}
Given a term $\M$ of type $\T$, \textsl{i.e.} $\emptyset ; \Delta \vdash \M :: \T$, the order of $\M$ with respect to $\T$ is $\ord(\T)$.
\end{defi}

\begin{exa}\label{ex1}
Consider the following term $\M$ that maps a function to a list given as inputs:
\begin{align*}
\letrec{\funone}{\lambda \funtwo.\lambda \x.\ccase{\x}\  {}\conone(\y,\z) &\to \conone \ (\funtwo\  \y)\ (\funone\ \funtwo\ \z) },\\
\nil &\to \nil .
\end{align*}
Let $[\Nat]$ is the base type for lists of natural numbers of constructor symbol set $\mathcal{C}_{[\Nat]}=\{\nil,\conone\}$. 
The term ${\M}$ can be typed by $\emptyset ; \Delta \vdash \M :: (\Nat \longrightarrow \Nat) \longrightarrow [\Nat] \longrightarrow [\Nat]$, as illustrated by the following typing derivation:

$$
\infer[(\texttt{Let})]{\emptyset;\Delta\vdash {\M}\ ::(\Nat \longrightarrow \Nat) \longrightarrow [\Nat] \longrightarrow [\Nat]}{
	\infer[(\texttt{Abs})]{\funone::\T;\Delta\vdash \lambda \funtwo.\lambda \x.\ccase{\x}\  {}\ \conone(\y,\z) \to \conone\ (\funtwo\  \y)\ (\funone\ \funtwo\ \z), \nil \to \nil  \ ::\T }{
		\infer[(\texttt{Abs})]{\Gamma;\Delta\vdash \lambda \x.\ccase{\x}\  {}\ \conone(\y,\z) \to \conone\ (\funtwo\  \y)\ (\funone\ \funtwo\ \z), \nil \to \nil \ :: [\Nat] \longrightarrow [\Nat] }{ 
			\infer[(\texttt{Case})]{\Gamma,\x::[\Nat];\Delta\vdash  \ccase{\x}\  {}\ \conone(\y,\z) \to \conone\ (\funtwo\  \y)\ (\funone\ \funtwo\ \z), \nil \to \nil :: [\Nat]}{
							\infer[(\texttt{App})]{\cdots \qquad \Gamma,\y::\Nat,\z::[\Nat];\Delta\vdash  \conone\ (\funtwo\  \y)\ (\funone\ \funtwo\ \z) :: [\Nat]}{
							\cdots \qquad
							\infer[(\texttt{App})]{\Gamma';\Delta\vdash  \conone\ (\funtwo\  \y) :: [\Nat] \longrightarrow [\Nat]}{
							 	\infer[(\texttt{App})]{\cdots \qquad \Gamma';\Delta\vdash  (\funtwo\  \y) :: \Nat}{\cdots}
							}
							\qquad
							\infer[(\texttt{App})]{\Gamma';\Delta\vdash  (\funone\ \funtwo\ \z) :: [\Nat] }{
			\cdots
							}
						}
			}
		}
	}
}
$$

where $\T$ is a shorthand notation for $(\Nat \longrightarrow \Nat) \longrightarrow [\Nat] \longrightarrow [\Nat]$, where the derivation of the base case $\nil$ has been omitted for readability and where the contexts $\Delta,\Gamma,\Gamma'$ are such that $\Delta(\conone)= \Nat \longrightarrow [\Nat] \longrightarrow [\Nat]$, $\Delta(\nil)=[\Nat]$ and $\Gamma=\funone::\T,\funtwo::\Nat \longrightarrow \Nat$ and $\Gamma'=\Gamma,\y::\Nat,\z::[\Nat]$.  Consequently, the order of $\M$ (with respect to $\T$) is equal to $2$, as $\ord(\T)=2$.
\end{exa}

\section{Interpretations}
\subsection{Interpretations of types}
We briefly recall some basic definitions that are very close from the notions used in denotational semantics (See~\cite{W93}) since, as we shall see later, interpretations can be defined in terms of fixpoints. Let $(\bN, \leq,\sqcup,\sqcap)$ be the set of natural numbers equipped with the usual ordering $\leq$, a max operator $\sqcup$ and min operator $\sqcap$ and let $\cN$ be $\bN \cup \{ \top\}$, where $\top$ is the greatest element satisfying $\forall n \in \cN,\ n \leq \top$, $n \sqcup \top= \top \sqcup n = \top$ and $n \sqcap \top = \top \sqcap n =n$.

The \emph{interpretation} of a type  is defined inductively by:
\begin{align*}
\interp{\texttt{b}}&=\cN, &&\text{if }\texttt{b}\text{ is a basic type,}\\
 \interp{\T \longrightarrow \T'}&=\interp{\T} \longrightarrow^{\uparrow} \interp{\T'}, &&\text{otherwise,}
\end{align*}
where $\interp{\T} \longrightarrow^{\uparrow} \interp{\T'}$ denotes the set of total non decreasing functions from $\interp{\T} $ to $\interp{\T'}$. A function $F$ from the set $A$ to the set $B$ being non decreasing if for each $X,Y \in A,$ $X \leq_{A} Y$ implies $F(X) \leq_{B} F(Y)$, where $\leq_{A}$ is the usual pointwise ordering induced by $\leq$ and defined by:
\begin{align*}
n \leq_{\cN} m&\text{ iff } n \leq m,\\
F \leq_{A \longrightarrow^{\uparrow} B} G &\text{ iff } \forall X \in A,\ F(X) \leq_{B} G(X).
\end{align*}

\begin{exa}
The type $\T=(\Nat \longrightarrow \Nat) \longrightarrow [\Nat] \longrightarrow [\Nat]$ of the term $\letrec{\funone}{\M}$ in Example~\ref{ex1} is interpreted by:
$$
\interp{\T}= (\cN  \longrightarrow^{\uparrow} \cN)  \longrightarrow^{\uparrow} (\cN  \longrightarrow^{\uparrow} \cN).$$
\end{exa}

In what follows, given a sequence $\vet{F}$ of $m$ terms in the interpretation domain  and a sequence $\vet{\T}$ of $k$ types, the notation $\vet{F} \in \vet{\interp{\T}}$ means that both $k=m$ and $\forall i \in [1,m],\ F_i \in \interp{\T_i}$.

\subsection{Interpretations of terms}

\newcommand{\fleche}{\longrightarrow}
\newcommand{\Tau}{\T}
Each closed term of type $\T$ will be interpreted by a function in ${\interp{\T}}$. The application is denoted as usual whereas we use the notation $\Lambda$ for abstraction on this function space in order to avoid confusion between terms of our calculus and objects of the interpretation domain. Variables of the interpretation domain will be denoted using upper case letters. When needed, Church typing discipline will be used in order to highlight the type of the bound variable in a lambda abstraction.

An important distinction between the terms of the language and the objects of the interpretation domain lies in the fact that beta-reduction is considered as an equivalence relation on (closed terms of) the interpretation domain, \textsl{i.e.} $(\Lambda X. F)\ G = F\{G/X\}$ underlying that $(\Lambda X. F)\ G$ and $ F\{G/X\}$  are distinct notations that represent the same higher order function. The same property holds for $\eta$-reduction, \textsl{i.e.} $ \Lambda X. (F\ X)$ and $F$ denote the same function.

In order to obtain complete lattices,  each type ${\interp{\T}}$ has to be completed by a lower bound $\bot_{\interp{\T}}$ and an upper bound $\top_{\interp{\T}}$ as follows:
\begin{align*}
\bot_{\cN}&=0, \\
 \top_{\cN}&=\top,\\
\bot_{\interp{\T \longrightarrow \T'}}&= \Lambda X^{\interp{\T}}.\bot_{\interp{\T'}}, \\ 
\top_{\interp{\T \longrightarrow \T'}}&= \Lambda X^{\interp{\T}}.\top_{\interp{\T'}}.
\end{align*}

\begin{lem}
For each $\T$ and for each $F \in \interp{\T}$, $\bot_{\interp{\T}} \leq_{\interp{\T}} F \leq_{\interp{\T}} \top_{\interp{\T}}$.
\end{lem}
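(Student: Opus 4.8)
The plan is to prove the statement by induction on the structure of the type $\T$, since both the orderings and the bounds $\bot_{\interp{\T}}, \top_{\interp{\T}}$ are defined by recursion on types. The base case $\T = \texttt{b}$ amounts to showing that for every $F \in \interp{\texttt{b}} = \cN$ we have $0 \leq_{\cN} F \leq_{\cN} \top$. This is immediate from the definitions: $0$ is the least natural number, $\top$ is by construction the greatest element of $\cN$, and $\leq_{\cN}$ is just $\leq$ extended to $\top$, so $0 \leq n$ for all $n \in \bN$, $0 \leq \top$, $n \leq \top$ for all $n \in \bN$, and $\top \leq \top$.

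For the inductive step, consider $\T = \T_1 \longrightarrow \T_2$ and an arbitrary $F \in \interp{\T_1 \longrightarrow \T_2}$, i.e. a total non decreasing function from $\interp{\T_1}$ to $\interp{\T_2}$. By definition of the pointwise ordering $\leq_{\interp{\T_1 \longrightarrow^{\uparrow} \T_2}}$, to show $\bot_{\interp{\T_1 \longrightarrow \T_2}} \leq_{\interp{\T}} F$ it suffices to show that for every $X \in \interp{\T_1}$, we have $\bot_{\interp{\T_1 \longrightarrow \T_2}}(X) \leq_{\interp{\T_2}} F(X)$. Now $\bot_{\interp{\T_1 \longrightarrow \T_2}}(X) = (\Lambda X^{\interp{\T_1}}. \bot_{\interp{\T_2}})(X) = \bot_{\interp{\T_2}}$ by the beta-equivalence on the interpretation domain, and $F(X) \in \interp{\T_2}$ since $F$ is total with codomain $\interp{\T_2}$; hence $\bot_{\interp{\T_2}} \leq_{\interp{\T_2}} F(X)$ by the induction hypothesis applied to $\T_2$. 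The argument for $F \leq_{\interp{\T}} \top_{\interp{\T}}$ is entirely symmetric: for each $X \in \interp{\T_1}$, $\top_{\interp{\T_1 \longrightarrow \T_2}}(X) = \top_{\interp{\T_2}}$ and $F(X) \leq_{\interp{\T_2}} \top_{\interp{\T_2}}$ by the induction hypothesis.

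There is essentially no hard part here; the only thing one must be a little careful about is the interplay between the syntactic abstraction $\Lambda$ and function application in the interpretation domain — one uses exactly the stipulated beta-equivalence $(\Lambda X. F)\, G = F\{G/X\}$ to evaluate $\bot$ and $\top$ at a point — and the fact that $\bot_{\interp{\T_2}}$ and $\top_{\interp{\T_2}}$ genuinely live in $\interp{\T_2}$ so that the induction hypothesis is applicable to them; this is itself a trivial sub-induction (a constant function is non decreasing). One could also remark in passing that the same induction shows $\bot_{\interp{\T}}$ and $\top_{\interp{\T}}$ are themselves elements of $\interp{\T}$, which is needed for the statement to even typecheck, but that is immediate and can be left implicit.
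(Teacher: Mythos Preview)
Your proof is correct and follows exactly the approach the paper indicates: the paper's proof is simply ``By induction on types,'' and you have spelled out precisely that induction, handling the base case $\cN$ directly and the arrow case pointwise via the induction hypothesis on $\T_2$. There is nothing to add.
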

\begin{proof}
By induction on types.
\end{proof}
 Notice that for each type $\T$ it also holds that $\top_{\interp{\T}} \leq_{\interp{\T}} \top_{\interp{\T}}$, by an easy induction.

In the same spirit, $\max$ and $\min$ operators $\sqcup$ (and $\sqcap$) over $\cN$ can be extended to higher order functions $F,G$ of any arbitrary type $\interp{ \T} \longrightarrow^{\uparrow} \interp{\T'}$ by:
\begin{align*}
\sqcup^{\interp{ \T} \longrightarrow^{\uparrow} \interp{\T'}}(F,G) &= \Lambda X^{\interp{\T}}.\sqcup^{\interp{\T'}}( F(X), G(X)),\\
\sqcap^{\interp{ \T} \longrightarrow^{\uparrow} \interp{\T'}}(F,G) &= \Lambda X^{\interp{\T}}.\sqcap^{\interp{\T'}}( F(X), G(X)).
\end{align*}
In the following, we use the notations $\bot$, $\top$, $\leq$, $<$, $\sqcup$ and $\sqcap$ instead of $\bot_{\interp{\T}}$, $\top_{\interp{\T}}$, $\leq_{\interp{\T}}$, $<_{\interp{\T}}$, $\sqcup^{\interp{\T}}$ and $\sqcap^{\interp{\T}}$, respectively, when $\interp{\T}$ is clear from the typing context. Moreover, given a boolean predicate $P$ on functions, we will use the notation $\sqcup_P\{F\}$ as a shorthand notation for $\sqcup\{ F \ | \  P\}$.

\begin{lem}\label{lattice}
For each type $\T$, $(\interp{\T},\leq,\sqcup,\sqcap,\top,\bot)$ is a complete lattice.
\end{lem}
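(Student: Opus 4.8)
The plan is to proceed by induction on the structure of the type $\T$, establishing at each stage that $(\interp{\T},\leq,\sqcup,\sqcap,\top,\bot)$ is a complete lattice, i.e. that every subset has a least upper bound and a greatest lower bound with respect to $\leq$, and that $\bot$ and $\top$ are the least and greatest elements (the latter being already covered by the preceding lemma).

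\emph{Base case.} For $\T = \texttt{b}$ a basic type, $\interp{\texttt{b}} = \cN = \bN \cup \{\top\}$. One checks directly that $(\cN,\leq,\sqcup,\sqcap,\top,0)$ is a complete lattice: for any subset $S \subseteq \cN$, if $S$ is empty or $S = \{0\}$ its supremum is $0$; if $S$ is a bounded nonempty subset of $\bN$ its supremum is the usual maximum; and in every other case (i.e. $S$ unbounded in $\bN$, or $\top \in S$) its supremum is $\top$. Dually for infima, using that $0$ is the bottom element. This is just the standard fact that adjoining a top element to $(\bN,\leq)$ yields a complete lattice.

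\emph{Inductive step.} Suppose $\interp{\T'}$ is a complete lattice; we show $\interp{\T \longrightarrow \T'} = \interp{\T} \longrightarrow^{\uparrow} \interp{\T'}$ is one as well, with the pointwise order. Given a set $\mathcal{S} \subseteq \interp{\T \longrightarrow \T'}$, define $H = \Lambda X^{\interp{\T}}.\, \sqcup^{\interp{\T'}}\{\, F(X) \mid F \in \mathcal{S}\,\}$, which is well defined because $\interp{\T'}$ is complete by the induction hypothesis. The three things to verify are: (i) $H$ is genuinely an element of $\interp{\T \longrightarrow \T'}$, i.e. it is non decreasing — this follows because each $F \in \mathcal{S}$ is non decreasing, so $X \leq Y$ gives $F(X) \leq F(Y) \leq H(Y)$ for all $F$, whence $H(X) \leq H(Y)$ by the least-upper-bound property; (ii) $H$ is an upper bound of $\mathcal{S}$, immediate from the pointwise definition of $\leq$ and the fact that $F(X) \leq \sqcup\{G(X) \mid G \in \mathcal{S}\} = H(X)$; (iii) $H$ is the least such, since if $K \in \interp{\T \longrightarrow \T'}$ satisfies $F \leq K$ for all $F \in \mathcal{S}$, then $F(X) \leq K(X)$ for every $X$ and every $F$, so $H(X) = \sqcup\{F(X)\} \leq K(X)$ pointwise, i.e. $H \leq K$. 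The construction of greatest lower bounds is entirely dual, replacing $\sqcup$ by $\sqcap$ and using that $\interp{\T'}$ has all infima. Finally $\bot_{\interp{\T \longrightarrow \T'}} = \Lambda X.\bot_{\interp{\T'}}$ and $\top_{\interp{\T \longrightarrow \T'}} = \Lambda X.\top_{\interp{\T'}}$ are the induced least and greatest elements by the preceding lemma.

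The only mildly delicate point — and the step I would flag as the main obstacle — is (i): one must check that the pointwise supremum of a family of monotone functions is again monotone, and hence that $\interp{\T \longrightarrow \T'}$ is closed under the operations, so that it really is a \emph{sub}-complete-lattice of the full function space rather than merely a subposet. This is where the restriction to the non decreasing function space $\longrightarrow^{\uparrow}$ in the definition of $\interp{\cdot}$ is used in an essential way; everything else is a routine unfolding of the pointwise definitions of $\leq$, $\sqcup$, and $\sqcap$ together with the induction hypothesis.
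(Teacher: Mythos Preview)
Your proof is correct and follows essentially the same approach as the paper: both take the pointwise supremum $\sqcup_{F\in S} F$ and verify it is the least upper bound, with the infimum handled dually. Your version is more careful than the paper's terse argument---in particular, the paper does not explicitly verify your point~(i), that the pointwise supremum of monotone functions is again monotone and hence lies in $\interp{\T}\longrightarrow^{\uparrow}\interp{\T'}$, which you rightly flag as the one nontrivial step.
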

\begin{proof}
Consider a subset $S$ of elements in $\interp{\T}$ and define $\sqcup S = \sqcup_{F \in S}F$. By definition, we have $F \leq \sqcup S$, for any $F \in S$. Now consider some $G$ such that for all $F \in S$, $F \leq G$. We have $\forall F \in S$, $\forall X,$ $F(X) \leq G(X)$. Consequently, $\forall X,\ S(X)=\sqcup_{F \in S} F(X) \leq G(X)$ and $S$ is a supremum. The same holds for the infimum.
\end{proof}

Now we need to define a unit (or constant) cost function for any interpretation of type $\T$. For that purpose, let $+$ denote natural number addition extended to $\cN$ by $\forall n,\ \top+n=n+\top=\top$.
For each type $\interp{\T}$, we define inductively a dyadic sum function $\oplus_{\interp{\T}}$ by:
\begin{align*}
 X^{\cN} \oplus_{{\cN}} Y^{\cN} &= X+Y,\\
F \oplus_{\interp{\T \longrightarrow \T'}}  G  &= \Lambda X^{\interp{\T}}. (F (X)\oplus_{\interp{\T'}}G (X)).
\end{align*}

Let us also define the constant function $n_{\interp{\T}}$, for each type $\T$ and each integer $n \geq 1$, by:
\begin{align*}
n_{\cN} &=n,\\
n_{\interp{\T \longrightarrow \T'}}  &= \Lambda X^{\interp{\T}}. n_{\interp{\T'}}.
\end{align*}
Once again, we will omit the type when it is unambiguous using the notation $n \oplus$ to denote the function $n_{\interp{\T}} \oplus_{\interp{\T}}$ when $\interp{\T}$ is clear from the typing context. 

For each type $\interp{\T}$, we can define a strict ordering $<$ by: $F <G$ whenever $1 \oplus F \leq G$.

\begin{defi}
A \emph{variable assignment}, denoted $\rho$, is a map associating to
each $\funone \in\Var$ of type $\T$ a variable $F$ of type $\interp{\T}$.
\end{defi}

Now we are ready to define the notions of variable assignment and interpretation of a term $\M$. 

\begin{defi}[Interpretation]\label{def:inter} 
    Given a variable assignment $\rho$, an \emph{interpretation} is the
    extension of $\rho$ to well-typed terms, mapping each term of type $\T$ to an object in $\interp{\T}$ and defined by:
  \begin{itemize}
     \item $ \interpcontext{ \funone}{\rho}=\rho(\funone),\  \text{ if } \funone \in \Var$,\\
    \item $\interpone{\conone}=1 \oplus (\Lambda X_1.\ldots. \Lambda X_n.\sum_{i=1}^{n}X_i ),\ \text{ if }ar(\conone)=n$, \\
\item $\interpcontext{\M \N}{\rho}=\interpone{\M} \interpone{\N}$,\\
\item $\interpone{\lambda \x. \M}= 1 \oplus (\Lambda \interpone{\varone}. \interpcontext{\M}{\rho})$,\\
     \item $\interpone{\texttt{case}\ \M \ \texttt{of} \ldots \conone_j(\vet{\x_j}) \to \M_j \ldots} =1 \oplus  \sqcup_{1\leq i\leq m}\sqcup_{\interpcontext{\M}{\rho}\geq \interpone{\conone_i}\vet{F_i}}\{\interpone{\M} \oplus \interpone{\M_i}\{\vet{F_i}/\vet{\interpone{\x_i}}\}\}$,\\
   \item $ \interpcontext{\letrec{\funone}{\M}}{\rho}= \sqcap\{ F \in {\interp{\T}}  \ | \ F \geq 1 \oplus((\Lambda \interpone{\funone}. \interpone{\M})\  F)\}$,
    \end{itemize}    
where $\interpone{{\op}}$ is a non decreasing total function such that: $$\forall \M_1,\ldots,\forall \M_n, \ \interpone{{\op}\ {\M_1}\ \ldots \ {\M_n}} \geq \interpone{\sem{\op}(\M_1,\ldots,\M_n)}.$$
\end{defi}

The aim of the interpretation of a term is to give a bound on its computation time as we will shortly see in Corollary~\ref{coro2}. For that purpose, it requires a strict decrease of the interpretation under $\Rightarrow$. This is the object of Lemma~\ref{lem2}. This is the reason why any \emph{``construct''} of the language involves a $1 \oplus$ in each rule of Definition~\ref{def:inter}. Application that plays the role of a \emph{``destructor''} does not require this. This is also the reason why the interpretation of a constructor symbol does not depend on its nature.

Remark that the condition on the the interpretation of operators correspond to the notion of \emph{sup-interpretation}  (See~\cite{P13} for more details).

\subsection{Existence of an interpretation}

The interpretation of a term is always defined. Indeed, in Definition~\ref{def:inter}, $\interpcontext{\letrec{\funone}{\M}}{\rho}$ is defined in terms of the least fixpoint of the function $\Lambda X^{\interp{\T}}. 1 \oplus_{\interp{\T}} ((\Lambda \interpone{\funone}.   \interpone{\M})\ X)$ and, consequently, we obtain the following result as a direct consequence of Knaster-Tarski~\cite{Tar55,KS01} Fixpoint Theorem:
\begin{prop}\label{existence}
Each term $\M$ of type $\T$ has an interpretation.
\end{prop}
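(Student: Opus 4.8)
The plan is to proceed by structural induction on the well-typed term $\M$, showing that each clause of Definition~\ref{def:inter} produces a well-defined element of $\interp{\T}$, where $\T$ is the type of $\M$. The only clause that is genuinely problematic is the one for $\letrec{\funone}{\M}$, since all the others are built from operations (application, $\Lambda$-abstraction, $\oplus$, $\sqcup$, the constant $1$, and the operator interpretations, which are given as non decreasing total functions by hypothesis) that are already known to preserve membership in the interpretation domains and, crucially, to preserve monotonicity. So after setting up the induction, I would dispatch the variable, constructor, operator, application, abstraction and $\ocase$ cases quickly, each time checking two things: that the resulting object has the right type, and that it is non decreasing (the latter being needed so that it actually lives in $\interp{\T}$ rather than merely in the full function space).

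For the $\oletrec$ case, the key observation is that if $\funone::\T$ and $\Gamma,\funone::\T;\Delta\vdash\M::\T$, then by the induction hypothesis $\Lambda\interpone{\funone}.\interpone{\M}$ is a well-defined non decreasing function in $\interp{\T}\longrightarrow^{\uparrow}\interp{\T}$, hence so is the functional
\begin{equation*}
\Phi \;=\; \Lambda X^{\interp{\T}}.\; 1 \oplus_{\interp{\T}} \big((\Lambda\interpone{\funone}.\interpone{\M})\ X\big)\;:\;\interp{\T}\longrightarrow^{\uparrow}\interp{\T},
\end{equation*}
since $\oplus$ and the constant $1_{\interp{\T}}$ are monotone in each argument and composition of non decreasing functions is non decreasing. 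By Lemma~\ref{lattice}, $(\interp{\T},\leq,\sqcup,\sqcap,\top,\bot)$ is a complete lattice, and $\Phi$ is a monotone (order-preserving) self-map of this lattice. The Knaster–Tarski Fixpoint Theorem then guarantees that the set of pre-fixpoints $\{F\in\interp{\T}\mid F\geq\Phi(F)\}$ is nonempty (it contains $\top$) and that its infimum $\sqcap\{F\mid F\geq 1\oplus((\Lambda\interpone{\funone}.\interpone{\M})\ F)\}$ is itself a fixpoint of $\Phi$, in particular an element of $\interp{\T}$. That infimum is exactly $\interpcontext{\letrec{\funone}{\M}}{\rho}$, so the clause is well-defined.

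I would package the induction so that the statement proved is slightly stronger than ``an interpretation exists'': namely that for every subterm and every variable assignment $\rho$ defined on its free variables, $\interpone{\cdot}$ yields an element of the appropriate $\interp{\T}$. This strengthening is needed because the $\lambda$ and $\oletrec$ clauses invoke the interpretation under an extended assignment. The main obstacle — and the only place where real work is hidden — is verifying that the functional $\Phi$ in the $\oletrec$ case is monotone, i.e.\ that the constructions combining the inductively-obtained non decreasing functions (particularly $\oplus$ and the nested $\Lambda$-abstraction/application) stay within $\longrightarrow^{\uparrow}$; once that is in place, the existence of the least fixpoint is an immediate appeal to Knaster–Tarski on the complete lattice supplied by Lemma~\ref{lattice}. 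The other cases are routine monotonicity bookkeeping, and the $\ocase$ clause, despite its notational heaviness, only uses $\sqcup$ over a set of already-well-defined functions, which is well-defined precisely because the lattice is complete.
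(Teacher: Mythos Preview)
Your proposal is correct and follows essentially the same approach as the paper: invoke Lemma~\ref{lattice} to obtain a complete lattice on $\interp{\T}$, argue that the functional $\Phi=\Lambda X.\,1\oplus((\Lambda\interpone{\funone}.\interpone{\M})\ X)$ is monotone, and apply Knaster--Tarski to obtain the least (pre-)fixpoint as $\sqcap\{F\mid F\geq\Phi(F)\}$. Your write-up is in fact more explicit than the paper's, which compresses the non-\texttt{letRec} cases into a one-line remark that monotonicity is preserved by application, abstraction, $\sqcup$ and $\sqcap$; your observation that the induction hypothesis must be stated for open subterms under extended assignments is a useful clarification the paper leaves implicit.
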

\begin{proof}
By Lemma~\ref{lattice}, $L=(\interp{\T},\leq,\sqcup,\sqcap,\top,\bot)$ is a complete lattice. The function  $F =\Lambda X^{\interp{\T}}. 1 \oplus_{\interp{\T}} ((\Lambda \interpone{\funone}.   \interpone{\M})\ X) : L \to L$ is monotonic. Indeed, both constructor terms and {\tt letRec} terms of type $\interp{\T}$ are interpreted over a space of monotonic functions ${\interp{\T}}$. Moreover monotonicity is preserved by application, abstraction and the $\sqcap$ and $\sqcup$ operators. Applying Knaster-Tarski, we obtain that $F$ admits a least fixpoint, which is exactly $\sqcap\{ X \in {\interp{\T}}  \ | \ X \geq  F X\}$.
\end{proof}

\subsection{Properties of interpretations}
We now show intermediate lemmata.
The following Lemma can be shown by structural induction on terms:
\begin{lem}\label{lem1}
For all $\M, \N, \x$ such that $\x :: \T ; \Delta \vdash \M :: \T'$, $\emptyset ; \Delta \vdash \N :: \T$, we have: $$ \interpone{\M}\{\interpone{\N}/\interpone{\varone}\}=\interpone{\M\{{\N}/{\varone}\}}.$$
\end{lem}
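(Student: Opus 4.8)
The plan is to prove the substitution lemma by structural induction on the typing derivation of $\M$, that is, on the term $\M$ itself. The statement to establish is that for all well-typed $\M, \N$ with $\x :: \T ; \Delta \vdash \M :: \T'$ and $\emptyset ; \Delta \vdash \N :: \T$, we have $\interpone{\M}\{\interpone{\N}/\interpone{\varone}\}=\interpone{\M\{{\N}/{\varone}\}}$. Here the left-hand side is a substitution in the interpretation domain (replacing the interpretation-domain variable $\interpone{\varone}$ by the object $\interpone{\N}$, which is well-defined because $\interpone{\N}$ is closed), while the right-hand side is the interpretation of the syntactic substitution.

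First I would handle the base cases. If $\M = \x$, then $\M\{\N/\x\} = \N$ and $\interpone{\x}\{\interpone{\N}/\interpone{\varone}\} = \rho(\x)\{\interpone{\N}/\interpone{\varone}\} = \interpone{\N}$, using that $\rho(\x)$ is the variable $\interpone{\varone}$. If $\M = \funone$ is a variable distinct from $\x$, then $\M\{\N/\x\} = \funone$ and since $\interpone{\varone}$ does not occur free in $\rho(\funone)$ the substitution is vacuous; both sides equal $\rho(\funone)$. If $\M = \conone$ or $\M = \op$, then $\M$ is closed, the syntactic substitution is the identity, and $\interpone{\conone}$ (resp. $\interpone{\op}$) contains no free occurrence of $\interpone{\varone}$, so again both sides coincide. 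Note that the case $\M = \op$ relies on the fact that the operator interpretation is a fixed function not mentioning $\rho$, so it is unaffected.

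Then I would treat the inductive cases, each of which amounts to pushing the substitution through the defining clause of Definition~\ref{def:inter} and invoking the induction hypothesis on subterms. For application $\M = \M_1\ \M_2$: $\interpone{\M_1\ \M_2}\{\interpone{\N}/\interpone{\varone}\} = (\interpone{\M_1}\{\interpone{\N}/\interpone{\varone}\})(\interpone{\M_2}\{\interpone{\N}/\interpone{\varone}\})$ since substitution commutes with application in the interpretation domain, then apply the IH to both factors and fold back up. For abstraction $\M = \lambda \y.\M_0$ (with $\y$ distinct from $\x$ and not free in $\N$, by the Barendregt convention in force): substitution commutes with $1 \oplus$ and with $\Lambda$-abstraction over the fresh variable $\interpone{\y}$, so we reduce to the IH on $\M_0$. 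The $\oletrec$ case is analogous, pushing the substitution through $\sqcap$, through $1 \oplus$, and through the $\Lambda \interpone{\funone}$ binder (again renaming so $\funone \neq \x$ and $\funone \notin FV(\N)$), then using the IH on the body. For the $\ocase$ construct, substitution distributes over $1 \oplus$, over the finite $\sqcup_{1 \leq i \leq m}$, over the indexed supremum $\sqcup_{\interpcontext{\M_0}{\rho} \geq \interpone{\conone_i}\vet{F_i}}$ — here one must check that the predicate indexing the sup is transformed correctly, which follows from applying the IH to the scrutinee $\M_0$, since $\interpone{\conone_i}$ is closed — and over the inner $\oplus$, leaving the IH on $\M_0$ and on each branch $\M_i$.

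The main obstacle, such as it is, is bookkeeping rather than conceptual: one must be careful that all substitutions in the interpretation domain are capture-avoiding and that the bound interpretation-domain variables ($\interpone{\y}$, $\interpone{\funone}$, $\interpone{\x_i}$) are chosen distinct from $\interpone{\varone}$ and do not occur free in $\interpone{\N}$ — this is guaranteed because $\N$ is closed (so $\interpone{\N}$ has no free term-variable images) and because we assume bound variables have distinct names. The only genuinely non-mechanical point is the $\ocase$ clause, where the side condition under the supremum must be shown to match after substitution; this reduces cleanly to the induction hypothesis applied to the scrutinee together with the observation that $\interpone{\conone_i}$ and the $F_i$ range over objects independent of $\rho$'s value at $\x$. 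Everything else is a routine commutation of substitution with the lattice and function-space operations $1 \oplus$, $\oplus$, $\sqcup$, $\sqcap$, application, and $\Lambda$.
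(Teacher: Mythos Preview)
Your proposal is correct and follows exactly the approach the paper indicates: structural induction on terms. The paper itself does not spell out the cases, stating only that the lemma ``can be shown by structural induction on terms''; your write-up simply fills in the routine details of that induction.
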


\begin{lem}\label{decr}
For all $\M, \N, \x$ such that $\x :: \T ; \Delta \vdash \M :: \T'$, $\emptyset ; \Delta \vdash \N :: \T$, we have $ \interpone{\lambda \varone. \M\ \N} >\interpone{\M\{\N/\varone\}}$.
\end{lem}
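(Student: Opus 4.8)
The plan is to unfold both sides using Definition~\ref{def:inter} and Lemma~\ref{lem1}, and then exhibit a single $1 \oplus$ that gets consumed. First I would compute the left-hand side. By the interpretation of application, $\interpone{\lambda \varone.\M\ \N} = \interpone{\lambda \varone.\M}\ \interpone{\N}$. By the interpretation of abstraction, $\interpone{\lambda \varone.\M} = 1 \oplus (\Lambda \interpone{\varone}.\interpone{\M})$. So $\interpone{\lambda \varone.\M\ \N} = (1 \oplus (\Lambda \interpone{\varone}.\interpone{\M}))\ \interpone{\N}$.

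Next I would push the application through the $\oplus$. Since $\oplus_{\interp{\T \to \T'}}$ is defined pointwise, $(1_{\interp{\T\to\T'}} \oplus (\Lambda \interpone{\varone}.\interpone{\M}))\ \interpone{\N} = 1_{\interp{\T'}} \oplus ((\Lambda \interpone{\varone}.\interpone{\M})\ \interpone{\N})$; this uses that $1_{\interp{\T\to\T'}}\ G = 1_{\interp{\T'}}$ for any argument $G$, which follows directly from the definition of the constant function $n_{\interp{\T}}$. Then, because $\beta$-reduction is an equality on the interpretation domain, $(\Lambda \interpone{\varone}.\interpone{\M})\ \interpone{\N} = \interpone{\M}\{\interpone{\N}/\interpone{\varone}\}$, and by Lemma~\ref{lem1} this equals $\interpone{\M\{\N/\varone\}}$. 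Putting it together, $\interpone{\lambda \varone.\M\ \N} = 1 \oplus \interpone{\M\{\N/\varone\}}$.

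Finally I would invoke the definition of the strict ordering: $F < G$ holds precisely when $1 \oplus F \leq G$. Taking $F = \interpone{\M\{\N/\varone\}}$ and $G = \interpone{\lambda\varone.\M\ \N}$, the identity just established gives $1 \oplus F = G$, hence $1 \oplus F \leq G$ (by reflexivity of $\leq$, which is a partial order on the complete lattice $\interp{\T'}$ by Lemma~\ref{lattice}), and therefore $\interpone{\M\{\N/\varone\}} < \interpone{\lambda\varone.\M\ \N}$, i.e. $\interpone{\lambda\varone.\M\ \N} > \interpone{\M\{\N/\varone\}}$ as desired.

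The only genuinely delicate point is the type bookkeeping around the $1 \oplus$: one must be careful that the $1$ appearing in $\interpone{\lambda\varone.\M}$ lives in $\interp{\T \to \T'}$ while the $1$ appearing in the final inequality lives in $\interp{\T'}$, and that applying the former to $\interpone{\N}$ genuinely produces the latter — this is exactly what the inductive clause $n_{\interp{\T \to \T'}} = \Lambda X^{\interp{\T}}.n_{\interp{\T'}}$ guarantees, together with the pointwise definition of $\oplus$. Everything else is a routine rewriting using the fact that $\beta$-equality and Lemma~\ref{lem1} let us freely move between $(\Lambda \interpone{\varone}.\interpone{\M})\ \interpone{\N}$ and $\interpone{\M\{\N/\varone\}}$. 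No fixpoint reasoning is needed here, so this lemma is considerably simpler than the cases involving $\oletrec$ or $\ocase$.
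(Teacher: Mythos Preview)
Your proof is correct and follows essentially the same route as the paper's: unfold the interpretation of application and abstraction, use $\beta$-equality on the interpretation domain together with Lemma~\ref{lem1} to obtain $1 \oplus \interpone{\M\{\N/\varone\}}$, and conclude by the definition of $<$. Your extra care in explaining how the $1$ at type $\interp{\T \to \T'}$ becomes the $1$ at type $\interp{\T'}$ after application is a welcome elaboration, but the argument is otherwise identical to the paper's.
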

\begin{proof}
\begin{align*}
\interpone{\lambda \varone. \M\ \N} &= \interpone{\lambda \varone. \M} \interpone{ \N} &&\text{(By Definition~\ref{def:inter})} \\
&=(\Lambda \interpone{\varone}.1 \oplus \interpone{\M})\ \interpone{\N} &&\text{(By Definition~\ref{def:inter}) } \\
 &=1 \oplus\interpone{\M}\{\interpone{\N}/\interpone{\varone}\} &&\text{(By definition of } =)\\
&=1 \oplus\interpone{\M\{{\N}/{\varone}\}}  &&\text{(By Lemma~\ref{lem1})}\\
&>\interpone{\M\{{\N}/{\varone}\}} &&\text{(By definition of }>)
\end{align*}
and so the conclusion. 
\end{proof}

\begin{lem}\label{lem2}
 For all $\M$, we have: if $\M \Rightarrow \N$ then $\interpone{\M} \geq \interpone{\N}$. 
Moreover if $\taille{\M \Rightarrow \N}=1$ then $\interpone{\M} > \interpone{\N}$.
\end{lem}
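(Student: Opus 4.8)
The plan is to prove Lemma~\ref{lem2} by induction on the derivation of $\M \Rightarrow \N$ according to the rules of Figure~\ref{fig:lmom}. There are essentially two base cases corresponding to the rule $\M \Rightarrow^1 \N$ from $\M \to_\alpha \N$ (one for each of the four reduction relations $\beta$, $\ocase$, $\oletrec$, $\op$), and two inductive cases coming from the congruence rules for $\ocase$ and for application on the left. Note that the last rule of Figure~\ref{fig:lmom} (composition of $\Rightarrow^k$ steps) is not needed here since the statement is about a single $\Rightarrow$ step, i.e. $\Rightarrow^1$; but one should check that a $\Rightarrow^1$ derivation only uses the three remaining rules, with the base rule $\M \to_\alpha \N$ applied exactly once. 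Throughout, I would freely use that interpretations live in a space of non decreasing functions (so that $F \leq G$ implies $F\,H \leq G\,H$ and $H\,F \leq H\,G$ whenever well-typed), and that $\oplus$, $\sqcup$, $\sqcap$ and application are monotonic — facts already invoked in the proof of Proposition~\ref{existence}.

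For the base cases: the $\to_\beta$ case is exactly Lemma~\ref{decr}, giving the strict inequality $\interpone{\M} > \interpone{\N}$, which is consistent with $\taille{\cdot}=1$ for a $\beta$-step. For the $\to_\ocase$ case, where $\M = \texttt{case}\ \conone_j(\vet{\N_j})\ \texttt{of}\ \ldots \to_\ocase \M_j\{\vet{\N_j}/\vet{\varone_j}\} = \N$, I would unfold the definition of $\interpone{\texttt{case}\ \ldots}$: it is $1 \oplus$ a supremum over all $i$ and all $\vet{F_i}$ with $\interpcontext{\conone_j(\vet{\N_j})}{\rho} \geq \interpone{\conone_i}\vet{F_i}$ of $\interpone{\conone_j(\vet{\N_j})} \oplus \interpone{\M_i}\{\vet{F_i}/\vet{\interpone{\x_i}}\}$. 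Taking $i=j$ and $\vet{F_j} = \vet{\interpone{\N_j}}$ is an admissible choice (the side condition $\interpone{\conone_j(\vet{\N_j})} \geq \interpone{\conone_j}\vet{\interpone{\N_j}}$ holds with equality by Definition~\ref{def:inter} and Lemma~\ref{lem1}), so the supremum dominates $\interpone{\M_j}\{\vet{\interpone{\N_j}}/\vet{\interpone{\x_j}}\} = \interpone{\M_j\{\vet{\N_j}/\vet{\varone_j}\}}$ by Lemma~\ref{lem1}; adding the extra $1\oplus$ and the $\interpone{\conone_j(\vet{\N_j})}$ term only increases things, so $\interpone{\M} > \interpone{\N}$, again matching $\taille{\cdot}=1$. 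For $\to_\oletrec$, $\M = \letrec{\funone}{\M_0} \to_\oletrec \M_0\{\letrec{\funone}{\M_0}/\funone\} = \N$: by Definition~\ref{def:inter}, $\interpone{\M}$ is the least fixpoint of $G = \Lambda X. 1 \oplus ((\Lambda \interpone{\funone}.\interpone{\M_0})\,X)$, hence $\interpone{\M} = G(\interpone{\M}) = 1 \oplus \interpone{\M_0}\{\interpone{\M}/\interpone{\funone}\} = 1 \oplus \interpone{\M_0\{\M/\funone\}}$ by Lemma~\ref{lem1}, which is $> \interpone{\N}$; consistent with $\taille{\cdot}=1$. Finally for $\to_\op$, $\M = \op\ \M_1\ldots\M_n \to_\op \sem{\op}(\M_1,\ldots,\M_n) = \N$, and the defining property of $\interpone{\op}$ gives directly $\interpone{\M} \geq \interpone{\N}$; here the inequality need not be strict, but that is fine since $\taille{\M \Rightarrow \N} = 0$ for an operator step, so the ``moreover'' clause does not apply.

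For the inductive cases: if $\M = \texttt{case}\ \M'\ \texttt{of}\ \ldots \Rightarrow \texttt{case}\ \N'\ \texttt{of}\ \ldots = \N$ from $\M' \Rightarrow \N'$, the induction hypothesis gives $\interpcontext{\M'}{\rho} \geq \interpcontext{\N'}{\rho}$ (and $>$ if $\taille{\M' \Rightarrow \N'}=1$, equivalently $\taille{\M \Rightarrow \N}=1$). I then need that $\interpone{\texttt{case}\ \cdot\ \texttt{of}\ \ldots}$ is monotone in the scrutinee's interpretation. This uses that (a) enlarging $\interpcontext{\M'}{\rho}$ enlarges the index set of the supremum (the condition $\interpcontext{\M'}{\rho} \geq \interpone{\conone_i}\vet{F_i}$ becomes easier to satisfy), and (b) the summand $\interpone{\M'} \oplus \interpone{\M_i}\{\ldots\}$ itself increases since the $\interpone{\M'}$ factor does; both effects only increase the overall supremum, and the $1\oplus$ is preserved. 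For $\taille{\M \Rightarrow \N}=1$ one gets strictness because of the $1\oplus$ together with $\interpone{\M'} > \interpone{\N'}$ feeding into an inhabited supremum (the index set is always nonempty — e.g. the all-$\bot$ choice of $\vet{F_i}$ satisfies the constraint). If $\M = \M_0\ \N_0 \Rightarrow \M_0'\ \N_0 = \N$ from $\M_0 \Rightarrow \M_0'$, then $\interpone{\M} = \interpone{\M_0}\,\interpone{\N_0}$ and $\interpone{\N} = \interpone{\M_0'}\,\interpone{\N_0}$; the induction hypothesis gives $\interpone{\M_0} \geq \interpone{\M_0'}$, and since interpretations are non decreasing functions this yields $\interpone{\M_0}\,\interpone{\N_0} \geq \interpone{\M_0'}\,\interpone{\N_0}$. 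For strictness when $\taille{\M\Rightarrow\N}=1$: the IH gives $1 \oplus \interpone{\M_0'} \leq \interpone{\M_0}$, i.e. $\interpone{\M_0} \geq \interpone{\M_0'} \oplus 1$ pointwise, so applying to $\interpone{\N_0}$ and using the definition of $\oplus$ at the result type gives $\interpone{\M_0}\,\interpone{\N_0} \geq (\interpone{\M_0'}\,\interpone{\N_0}) \oplus (1_{\interp{\T'}}\,\ldots)$; unwinding the constant function $1_{\interp{\cdot}}$ down to the base type $\cN$ shows this is $\geq 1 \oplus (\interpone{\M_0'}\,\interpone{\N_0})$, i.e. $\interpone{\M} > \interpone{\N}$ — here I use that $\ord$ of the result type of an application is a base type eventually, or more directly that $1_{\interp{\T}}$ applied to anything reduces, after enough arguments, to $1$ at type $\cN$, which is exactly what $<$ needs. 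I expect this last point — correctly threading the strict inequality $<$ (defined via $1\oplus F \leq G$) through application, because $\interpone{\M_0\ \N_0}$ may still be a higher-order object — to be the main obstacle, and it is worth isolating it as a small auxiliary observation: if $F > G$ in $\interp{\T \longrightarrow \T'}$ then $F\,H > G\,H$ in $\interp{\T'}$, which follows because $1_{\interp{\T\longrightarrow\T'}} = \Lambda X. 1_{\interp{\T'}}$ so $(1 \oplus G)\,H = 1 \oplus (G\,H)$.
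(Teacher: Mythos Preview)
Your proposal is correct and follows the same approach as the paper: induction on the derivation of $\Rightarrow$, with Lemma~\ref{decr} handling $\to_\beta$, the sup-interpretation axiom handling $\to_\op$, and Lemma~\ref{lem1} together with Definition~\ref{def:inter} handling the remaining constructs. The only minor difference is that for the $\oletrec$ case you invoke the fixpoint equality from Proposition~\ref{existence} directly whereas the paper unfolds the infimum definition through a short chain of inequalities; you are also considerably more explicit than the paper about the congruence cases and about threading the strict inequality through application (your auxiliary observation that $(1_{\interp{\T\to\T'}}\oplus G)\,H = 1_{\interp{\T'}}\oplus(G\,H)$ is exactly what is needed there).
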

\begin{proof}
If $\taille{\M \Rightarrow \N}=0$ then $\M = \op\ \M_1\ \ldots\ \M_n \to_\op \sem{\op}(\M_1,\ldots,\M_n)=\N$, for some operator $\op$ and terms $\M_1,\ldots,\M_n$ and consequently, by Definition of interpretations we have: $$ \interpone{{\op}\ {\M_1}\ \ldots \ {\M_n}} \geq \interpone{\sem{\op}(\M_1,\ldots,\M_n)}.$$

If $\taille{\M \Rightarrow \N}=1$ then the reduction is not $\to_\op$.
By Lemma~\ref{decr}, in the case of a $\beta$-reduction and, by induction, by Lemma~\ref{lem1} and Definition~\ref{def:inter} for the other cases. \textsl{e.g.} For a {\tt letRec} reduction, we have: if $\M=\letrec{\funone}{\M'} \to_\oletrec {\M'}\{ \M/\funone\} = \N$ then:
\begin{align*}
\interpone{\M} &= \sqcap\{ F \in \interp{\T}  \ | \ F \geq 1 \oplus ((\Lambda \interpone{\funone}. \interpone{\N})\ F)\}\\
&\geq  \sqcap\{ 1 \oplus ((\Lambda \interpone{\funone}. \interpone{\N})\ F) \ | \ F \geq 1 \oplus ((\Lambda \interpone{\funone}. \interpone{\N})\ F)\} \\
&\geq 1 \oplus  ((\Lambda \interpone{\funone}. \interpone{\N})\  \sqcap \{ F \ | \ F \geq 1 ((\oplus \Lambda \interpone{\funone}. \interpone{\N})\ F)\})\\
&\geq 1 \oplus  ((\Lambda \interpone{\funone}. \interpone{\N})\ \interpone{\M})\\
&\geq 1 \oplus \interpone{\N}\{\interpone{\M}/\interpone{\funone}\}\\
&\geq 1 \oplus \interpone{\N\{\M/\funone\}} && \text{(By Lemma}~\ref{lem1})\\
&> \interpone{\N\{\M/\funone\}}. &&\text{(By definition of }>)
\end{align*}
The first inequality holds since we are only considering higher order functions $F$ satisfying $F \geq 1 \oplus ((\Lambda \interpone{\funone}. \interpone{\N})\  F)$.
The second inequality holds because $ \Lambda \interpone{\funone}. \interpone{\N}$ is a non decreasing function (as the interpretation domain only consists in such functions).
\end{proof}

As each reduction distinct from an operator evaluation corresponds to a strict decrease, the following corollary can be obtained:
\begin{cor}\label{coro1}
For all terms, $\M$, $\vet{\N}$, such that $ \emptyset ; \Delta \vdash \M\ \vet{\N} :: \T$, if $ \M\ \vet{\N} \Rightarrow^k \M'$ then $\interpone{\M} \interpone{\vet{\N}} \geq \taille{\M\ \vet{\N} \Rightarrow^k \M'}\oplus \interpone{\M'}.$
\end{cor}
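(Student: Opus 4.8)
The plan is to prove Corollary~\ref{coro1} by induction on the number $k$ of reduction steps in the derivation $\M\ \vet{\N} \Rightarrow^k \M'$, using Lemma~\ref{lem2} as the single-step engine. Note first that $\interpone{\M}\interpone{\vet{\N}} = \interpone{\M\ \vet{\N}}$ by iterated application of the $(\texttt{App})$ clause of Definition~\ref{def:inter}, so it suffices to show that $\interpone{\M\ \vet{\N}} \geq \taille{\M\ \vet{\N} \Rightarrow^k \M'} \oplus \interpone{\M'}$ for any well-typed term $\M\ \vet{\N}$; I would actually state and prove the stronger claim for an arbitrary term $\P$, namely that $\P \Rightarrow^k \P'$ implies $\interpone{\P} \geq \taille{\P \Rightarrow^k \P'} \oplus \interpone{\P'}$, which is cleaner for the induction.

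For the base case $k=0$ we have $\P = \P'$ and $\taille{\P \Rightarrow^0 \P'} = 0$, so the inequality reads $\interpone{\P} \geq 0 \oplus \interpone{\P} = \interpone{\P}$, which holds. For the inductive step, decompose $\P \Rightarrow^k \P'$ as $\P \Rightarrow \P'' \Rightarrow^{k-1} \P'$ (picking off the first step). By Lemma~\ref{lem2}, either $\taille{\P \Rightarrow \P''} = 0$, in which case $\interpone{\P} \geq \interpone{\P''}$ and $\taille{\P \Rightarrow \P''} = 0$; or $\taille{\P \Rightarrow \P''} = 1$, in which case $\interpone{\P} \geq 1 \oplus \interpone{\P''}$. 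In both cases $\interpone{\P} \geq \taille{\P \Rightarrow \P''} \oplus \interpone{\P''}$. By the induction hypothesis applied to $\P'' \Rightarrow^{k-1} \P'$, we get $\interpone{\P''} \geq \taille{\P'' \Rightarrow^{k-1} \P'} \oplus \interpone{\P'}$. Chaining these, and using that $\oplus$ is monotone in each argument (so that $a \oplus F \geq a \oplus G$ whenever $F \geq G$) together with associativity and commutativity of $\oplus$ on the numeric part, yields $\interpone{\P} \geq \taille{\P \Rightarrow \P''} \oplus \taille{\P'' \Rightarrow^{k-1} \P'} \oplus \interpone{\P'}$. Finally, since $\taille{\cdot}$ counts the non-operator reductions and these simply add over a concatenation of derivations, $\taille{\P \Rightarrow \P''} + \taille{\P'' \Rightarrow^{k-1} \P'} = \taille{\P \Rightarrow^k \P'}$, giving the desired bound.

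The main obstacle is not conceptual but bookkeeping: I must be careful that $\oplus$ genuinely behaves additively on the $\cN$-component at every type, i.e. that $m \oplus (n \oplus F) = (m+n) \oplus F$ and that $F \leq G$ implies $m \oplus F \leq m \oplus G$ — both follow by a straightforward induction on types from the definitions of $n_{\interp{\T}}$ and $\oplus_{\interp{\T}}$, and from the fact that $+$ on $\cN$ is associative and monotone. The other point to check is that $\taille{\cdot}$ is additive under concatenation of derivations, which is immediate from the definition of $\Rightarrow^{k+k'}$ in Figure~\ref{fig:lmom} and the definition of $\taille{\cdot}$ as a count of non-$\to_\op$ steps. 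I would also remark that the inequality in Lemma~\ref{lem2} is already phrased so that the two cases (operator step versus non-operator step) line up exactly with $\taille{\cdot}$ being $0$ or $1$, so no separate case analysis beyond what is sketched above is required.
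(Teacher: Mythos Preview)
Your proposal is correct and is exactly the argument the paper has in mind: the corollary is stated immediately after Lemma~\ref{lem2} with only the remark that ``each reduction distinct from an operator evaluation corresponds to a strict decrease,'' which is precisely the single-step engine you iterate by induction on $k$. Your explicit bookkeeping on the additivity of $\taille{\cdot}$ and the monotonicity/associativity of $\oplus$ simply fills in details the paper leaves implicit.
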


As basic operators can be considered as constant time computable objects the following Corollary also holds:
\begin{cor}\label{coro2}
For all terms, $\M$, $\vet{\N}$, such that $ \emptyset ; \Delta \vdash \M\ \vet{\N} :: \texttt{b}$, with $\texttt{b} \in \B$, if $\interpone{\M\ \vet{\N}} \neq \top$ then $\M\ \vet{\N}$ terminates in a number of reduction steps in $O(\interpone{\M\ \vet{\N}})$.
\end{cor}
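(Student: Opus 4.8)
The plan is to read off the bound from Corollary~\ref{coro1}, the only real work being to account for the operator reductions that the measure $\taille{\cdot}$ deliberately discards. Write $\M\ \vet{\N} = \M\ \N_1\ \ldots\ \N_j$; iterating the application clause of Definition~\ref{def:inter} gives $\interpone{\M\ \vet{\N}} = \interpone{\M}\ \interpone{\vet{\N}}$, and by the hypothesis $\interpone{\M\ \vet{\N}}\neq\top$ this is some natural number $N$. Since $\M\ \vet{\N}$ has base type $\texttt{b}$, so do all its reducts (subject reduction), hence every interpretation appearing below lives in $\cN$, where $\oplus$ is ordinary addition and every value is $\geq \bot_{\cN}=0$.

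First I would bound the non-operator steps. For an arbitrary reduction $\M\ \vet{\N} \Rightarrow^k \M'$, Corollary~\ref{coro1} yields $N = \interpone{\M}\ \interpone{\vet{\N}} \geq \taille{\M\ \vet{\N} \Rightarrow^k \M'} \oplus \interpone{\M'} \geq \taille{\M\ \vet{\N} \Rightarrow^k \M'}$, using $\interpone{\M'}\geq 0$. Hence \emph{every} reduction sequence starting from $\M\ \vet{\N}$ fires at most $N$ reductions that are not $\to_\op$ steps.

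Next I would control the $\to_\op$ steps, and this is where the hypothesis that operators are \emph{basic}, constant-time computable objects is used. When the leftmost outermost redex is an operator application $\op\ \M_1\ \ldots\ \M_n$, it is contracted in a single $\to_\op$ step to $\sem{\op}(\M_1,\ldots,\M_n)$ without reducing the operands (cf. the footnote on totality of operators), and this result, being the output of a basic operator, is not again an operator redex in head position. Thus, in the leftmost outermost strategy, a $\to_\op$ step is always immediately followed by a non-$\to_\op$ step (typically a $\to_\ocase$ step inspecting the produced value) or by nothing at all. Consequently the number of $\to_\op$ steps in any reduction from $\M\ \vet{\N}$ is at most $N+1$, the total length of the reduction is at most $2N+1$, no infinite reduction exists, and $\M\ \vet{\N}\Rightarrow^k\sem{\M\ \vet{\N}}$ for some $k\leq 2N+1 = O(\interpone{\M\ \vet{\N}})$; since each $\to_\op$ step costs $O(1)$, this is the announced bound.

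The main obstacle is exactly this last step. Corollary~\ref{coro1} says nothing about $\to_\op$ reductions because $\taille{\cdot}$ throws them away, so the interpretation machinery alone provides no control over them: an operator whose semantics kept unfolding into further operator redexes could produce arbitrarily long reductions while keeping a bounded interpretation. One therefore has to invoke the extra assumption that the operators are genuine constant-time primitives, and this is what makes the corollary a statement about feasibility of the whole reduction rather than just of its structural part; everything else is a routine computation in the complete lattice $\cN$ on top of Corollary~\ref{coro1}.
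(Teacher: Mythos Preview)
Your proposal is correct and mirrors the paper's approach exactly: the paper offers no explicit proof of this corollary, only the one-line remark ``As basic operators can be considered as constant time computable objects the following Corollary also holds'', and your argument is precisely the unpacking of that remark---Corollary~\ref{coro1} bounds the non-$\to_\op$ steps by $\interpone{\M\ \vet{\N}}$, and the constant-time assumption on operators is what controls the $\to_\op$ steps. Your final paragraph correctly identifies that the interleaving claim (an $\to_\op$ step is never followed by another $\to_\op$ step) does not follow from the bare definitions, since $\sem{\op}$ is an arbitrary total map into $\mathcal{T}$ and could in principle return another operator redex; this is exactly the gap the paper closes by fiat with the constant-time hypothesis, and you invoke the same hypothesis in the same place.
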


The size $\taille{\vone}$ of a value $\vone$ (introduced in Subsection~\ref{ss:sem}) is defined by $\taille{\conone}=1$ and $\taille{\M\ \N}=\taille{\M}+\taille{\N}$.

\begin{lem}\label{lem7}
For any value $\vone$, such that $ \emptyset ; \Delta \vdash \vone  :: \texttt{b}$, with $\texttt{b} \in \B$, we have
$\interp{\vone} = \taille{\vone}$.
\end{lem}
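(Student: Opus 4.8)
The plan is to prove the statement by structural induction on the value $\vone$, following the inductive definition of values from Subsection~\ref{ss:sem}. Since $\vone$ is a closed term of basic type, its interpretation does not depend on the variable assignment (only constructor symbols and applications occur in it), so I will work with $\interpone{\vone}$ for an arbitrary $\rho$ and identify it with $\interp{\vone}$ as in the statement. In the base case $\vone = \conone$ with $ar(\conone)=0$, Definition~\ref{def:inter} gives $\interpone{\conone} = 1 \oplus (\sum_{i=1}^{0} X_i) = 1 \oplus 0 = 1 = \taille{\conone}$, which matches.

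For the inductive case, $\vone = \conone\ \vone_1\ \cdots\ \vone_n$ with $ar(\conone)=n>0$ and each $\vone_i$ a value. Because constructor types have the shape $\texttt{b}_1 \longrightarrow \cdots \longrightarrow \texttt{b}_n \longrightarrow \texttt{b}$, each $\vone_i$ is a closed value of basic type $\texttt{b}_i$, so the induction hypothesis applies and $\interp{\vone_i} = \taille{\vone_i}$. Using the clause $\interpcontext{\M\ \N}{\rho} = \interpone{\M}\ \interpone{\N}$ repeatedly,
$$\interp{\vone} = \interpone{\conone}\ \interp{\vone_1}\ \cdots\ \interp{\vone_n} = \left(1 \oplus (\Lambda X_1.\cdots\Lambda X_n.\textstyle\sum_{i=1}^{n} X_i)\right)\ \interp{\vone_1}\ \cdots\ \interp{\vone_n}.$$
The one delicate point, which I would isolate as an auxiliary observation proved by a side induction on $n$, is that $1 \oplus$ commutes with full application: for any $F \in \interp{\texttt{b}_1 \longrightarrow \cdots \longrightarrow \texttt{b}_n \longrightarrow \texttt{b}}$ and any $X_1,\ldots,X_n \in \cN$, one has $(1 \oplus F)\ X_1\ \cdots\ X_n = 1 + F\ X_1\ \cdots\ X_n$; this follows by unfolding the definitions of $\oplus_{\interp{\T}}$ and of the constant function $1_{\interp{\T}}$ on arrow types. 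Applying this with $F = \Lambda X_1.\cdots\Lambda X_n.\sum_{i=1}^{n} X_i$ and then $\beta$-equivalence in the interpretation domain yields $\interp{\vone} = 1 + \sum_{i=1}^{n} \interp{\vone_i}$.

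It then remains to compute the size side: applying $\taille{\M\ \N} = \taille{\M}+\taille{\N}$ exactly $n$ times together with $\taille{\conone}=1$ gives $\taille{\vone} = 1 + \sum_{i=1}^{n} \taille{\vone_i}$. Combining with the induction hypothesis $\interp{\vone_i} = \taille{\vone_i}$ gives $\interp{\vone} = 1 + \sum_{i=1}^{n} \taille{\vone_i} = \taille{\vone}$, closing the induction. The only genuine obstacle is the bookkeeping in the auxiliary observation relating $1 \oplus$ to application across several arguments; the rest is a straightforward unfolding of Definition~\ref{def:inter} and of the size function.
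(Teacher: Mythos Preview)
Your argument is correct: the structural induction on values, together with the auxiliary observation that $(1 \oplus F)\ X_1\cdots X_n = 1 + F\ X_1\cdots X_n$ whenever the codomain is a basic type, is exactly the right way to unfold Definition~\ref{def:inter} and the definition of $\taille{\cdot}$.

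As for the comparison: the paper states Lemma~\ref{lem7} without any proof, so there is nothing to compare against. Your proposal is the standard and expected argument, and it would serve perfectly well as the omitted proof.
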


\begin{exa}\label{prevv}
Consider the following term $\M :: \Nat \longrightarrow \Nat$ computing the double of a unary number given as input:
\begin{align*}
\letrec{\funone}{ \lambda \x.\ccase{\x}\  {}\suc(\y) &\to \suc(\suc(\funone\  \y)) },\\
\ 0 &\to 0.
\end{align*}
We can see below how the interpretation rules of Definition~\ref{def:inter} are applied on such  a term. 
\begin{align*}
\interpone{\M}&=\sqcap\{ F   \ | \ F \geq 1 \oplus   ((\Lambda \interpone{\funone}.\interpone{\lambda \x.\ccase{\x}\  {}\suc(\y) \to \suc(\suc(\funone\  \y)) | \ 0 \to 0}) F)\}\\
&=\sqcap\{ F \ | \ F \geq 2 \oplus( \Lambda \interpone{\x}. \interpone{\ccase{\x}\  {}\suc(\y) \to \suc(\suc(\funone\  \y)) | \ 0 \to 0} \{F/\interpone{\funone}\})\}\\
&=\sqcap\{ F  \ | \ F \geq 3 \oplus( \Lambda X. X\oplus((\sqcup_{X\geq \interpone{\suc(\y)}}\interpone{\suc(\suc(\funone\  \y))})\sqcup (\sqcup_{X\geq \interpone{0}} \interpone{0} ) \{F/\interpone{\funone}\})\}\\
&=\sqcap\{ F  \ | \ F \geq 3 \oplus(\Lambda X.X\oplus( (\sqcup_{X\geq 1 \oplus \interpone{\y}} 2 \oplus (\interpone{\funone}\  \interpone{\y}))\sqcup (\sqcup_{X\geq 1} 1 ) \{F/\interpone{\funone}\})\}\\
&=\sqcap\{ F \ | \ F \geq 3 \oplus(\Lambda X. X\oplus((\sqcup_{X\geq 1 \oplus \interpone{\y}} 2 \oplus (F\  \interpone{\y}))\sqcup (1 ))\}\\
&=\sqcap\{ F \ | \ F \geq 3 \oplus(\Lambda X. X\oplus(( 2 \oplus (F\  (X-1)))\sqcup(1))\}, \quad X-1 \geq 0\\
&=\sqcap\{ F  \ | \ F \geq  \Lambda X. (5 \oplus X \oplus (F\  (X-1))) \sqcup (4\oplus X) \}\\
&\leq \Lambda X.6X^2+5
\end{align*}

In the end, we search for the minimal non decreasing function $F$ greater than $\Lambda X. (5 \oplus X \oplus (F\  (X-1))) \sqcup (4\oplus X) $, for $X>1$. As the function $\Lambda X.6X^2\oplus5$ is a solution of such an inequality, the fixpoint is smaller than this function. Notice that such an interpretation is not tight as one should have expected the interpretation of such a program to be $\Lambda X.2X$. This interpretation underlies that:
\begin{itemize}
\item the iteration steps, distinct from the base case, count for $5\oplus X$: 1 for the {\tt letRec} call, 1 for the application, $1\oplus X$ for pattern matching and 2 for the extra-constructors added,
\item the base case counts for $4\oplus X$: 1 for recursive call, 1 for application, $1\oplus X$ for pattern matching and $1$ for the constructor.
\end{itemize}
 Consequently, we have a bound on both size of terms and reduction length though this upper bound is not that accurate. This is not that surprising as this technique suffers from the same issues as first-order interpretation based methods.

\end{exa}

\subsection{Relaxing interpretations}
For a given program it is somewhat difficult to find an interpretation that can be expressed in an easy way. This difficulty lies in the homogeneous definition of the considered interpretations using a max (for the case construct) and a min (for the {\tt letRec} construct). Indeed, it is sometimes difficult to eliminate the constraint (parameters) of a max generated by the interpretation of a case. Moreover, it is a hard task to find the fixpoint of the interpretation of a {\tt letRec}. All this can be relaxed as follows:
\begin{itemize}
\item finding an upper-bound of the max by eliminating constraints in the case construct interpretation,
\item taking a particular function satisfying the inequality in the {\tt letRec} construct interpretation.
\end{itemize}
In both cases, we will no longer compute an exact interpretation of the term but rather an upper bound of the interpretation.

\begin{lem}
Given a set of functions $S$ and a function $F \in S$, the following inequality always holds
$F \geq \sqcap\{ G | G \in S\}.$
\end{lem}
This relaxation is highly desirable in order to find ``lighter'' upper-bounds on the interpretation of a term.
Moreover, it is a reasonable approximation as we are interested in worst case complexity.
Obviously, it is possible by relaxing too much to attain the trivial interpretation $\top_{\interp{\T}}$.
Consequently, these approximations have to be performed with moderation as taking too big intermediate upper bounds might lead to an uninteresting upper bound on the interpretation of the term.

\begin{exa}\label{prev}
Consider the term $\M$ of Example~\ref{ex1}:
\begin{align*}
\letrec{\funone}{\lambda \funtwo.\lambda \x.\ccase{\x}\  {}\conone(\y,\z) &\to \conone\ (\funtwo\  \y)\ (\funone\ \funtwo\ \z) },\\
 \nil &\to \nil.
\end{align*}
Its interpretation $\interpone{\M}$ is equal to:
\begin{align*}
&=\sqcap\{ F \ | \ F \geq 1 \oplus ((\Lambda \interpone{\funone}.  \interpone{ \lambda \funtwo.\lambda \x.\ccase{\x}\  {}\conone(\y,\z) \to \conone(\funtwo\  \y)(\funone\ \funtwo\ \z) | \ \nil \to \nil })\ F)\}\\
&=\sqcap\{ F\ | \ F \geq 3 \oplus((\Lambda \interpone{\funone}. \Lambda \interpone{\funtwo}.\Lambda \interpone{\x}. \interpone{\ccase{\x}\  {}\conone(\y,\z) \to \conone(\funtwo\  \y)(\funone\ \funtwo\ \z) | \ \nil \to \nil })\ F)\}\\
&=\sqcap\{ F  \ | \ F \geq 4 \oplus(\Lambda \interpone{\funone}. \Lambda \interpone{\funtwo}.\Lambda \interpone{\x}. \interpone{\x}\oplus (\sqcup(\interpone{\nil}, \sqcup_{ \interpone{\x} \geq \interpone{\conone(\y,\z)}}(\interpone{\conone(\funtwo\  \y)(\funone\ \funtwo\ \z) })))\ F)\}\\
&=\sqcap\{ F  \ | \ F \geq 4 \oplus(\Lambda \interpone{\funone}. \Lambda \interpone{\funtwo}.\Lambda \interpone{\x}.  \interpone{\x}\oplus (\sqcup(1, \sqcup_{ \interpone{\x} \geq 1 \oplus \interpone{\y}+\interpone{\z)}}(1\oplus (\interpone{\funtwo}\  \interpone{\y})\\
&\phantom{=}+(\interpone{\funone}\ \interpone{\funtwo}\ \interpone{\z})))\ F))\}\\
&=\sqcap\{ F   \ | \ F \geq 5 \oplus(\Lambda \interpone{\funtwo}.\Lambda \interpone{\x}.  \interpone{\x}\oplus ( \sqcup_{ \interpone{\x} \geq 1 \oplus \interpone{\y}+\interpone{\z)}}((\interpone{\funtwo}\  \interpone{\y})+(F\ \interpone{\funtwo}\ \interpone{\z}))))\}\\
&\leq \sqcap\{ F  \ | \ F \geq 5 \oplus(\Lambda \interpone{\funtwo}.\Lambda \interpone{\x}.  \interpone{\x}\oplus ( ((\interpone{\funtwo}\  (\interpone{\x}-1))+(F\ \interpone{\funtwo}\ (\interpone{\x}-1)))))\}\\
&\leq \Lambda \interpone{\funtwo}.\Lambda \interpone{\x}.(5 \oplus (\interpone{\funtwo} \ \interpone{\x}))\times (2 \times \interpone{\x})^2.
\end{align*}

In the penultimate line, we obtain an upper-bound on the interpretation by approximating the case interpretation, substituting $\interpone{x}-1$ to both $\interpone{y}$ and $\interpone{z}$. This is a first step of relaxation where we find an upper bound for the max. The below inequality holds for any non decreasing function $F$: $$\underbrace{\sqcup_{ \interpone{\x} \geq 1 \oplus \interpone{\y}+\interpone{\z)}} F(\interpone{\y},\interpone{\z})}_{a} \leq \underbrace{\sqcup_{ \interpone{\x} \geq 1 \oplus \interpone{\y}, \interpone{\x} \geq 1 \oplus \interpone{\z)}} F(\interpone{\y},\interpone{\z})}_{b}.$$ Consequently, $\sqcup \{ F \ | \ F \geq b \} \leq \sqcup \{ F \ | \ F \geq a \}$. 

In the last line, we obtain an upper-bound on the interpretation by approximating the {\tt letRec} interpretation, just checking that the function $F=\Lambda \interpone{\funtwo}.\Lambda \interpone{\x}.(5 \oplus (\interpone{\funtwo} \ \interpone{\x}))\times (2 \times \interpone{\x}) ^2$, where $\times$ is the usual multiplication symbol over natural numbers, satisfies the inequality: $$F \geq 5 \oplus(\Lambda \interpone{\funtwo}.\Lambda \interpone{\x}. \interpone{\varone}\oplus ( ((\interpone{\funtwo}\  (\interpone{\x}-1))+(F\ \interpone{\funtwo}\ (\interpone{\x}-1))))).$$

\end{exa}

\subsection{Higher Order Polynomial Interpretations}
At the present time, the interpretation of a term of type $\T$ can be any total functional over $\interp{\T}$. In the next section, we will concentrate our efforts to study polynomial time at higher order. Consequently, we need to restrict the shape of the admissible interpretations to higher order polynomials which are the higher order counterpart to polynomials in this theory of complexity.

\begin{defi}\label{def:hop}
We consider types built from the basic type $\cN$ as follows:
$$A,B  ::= \cN \ | \ A \longrightarrow B.$$
Higher order polynomials are built by the following grammar:
$$ P,Q ::= c^{\cN} \ | \ X^A \ | \ +^{\cN \longrightarrow \cN \longrightarrow \cN} \ | \ \times^{\cN \longrightarrow \cN \longrightarrow \cN}\ | \ (P^{A \longrightarrow B}\ Q^A)^B \ | \ (\Lambda X^A.P^B)^{A \longrightarrow B}.$$
where $c$ represents constants in $\bN$ and $P^A$ means that $P$ is of type $A$. 
A polynomial $P^A$ is of order $i$ if $\ord(A)=i$. When $A$ is explicit from the context, we use the notation $P_i$ to denote a polynomial of order $i$.
\end{defi}
In the above definition, constants of type $\cN$ are distinct from $\top$.
By definition, a higher order polynomial $P_{i}$ has arguments of order at most $i-1$. 
For notational convenience, we will use the application of $+$ and $\times$ with an infix notation as in the following example.
\begin{exa}
Here are several examples of polynomials generated by the grammar of Definition~\ref{def:hop}:
\begin{itemize}
\item $P_1 = \Lambda X_0.(6 \times X_0^2+5)$ is an order $1$ polynomial,
\item $Q_1 = \times$ is an order $1$ polynomial,
\item $P_2 = \Lambda X_1.\Lambda X_0.(3 \times (X_1\ (6 \times X_0^2+5))+X_0)$ is an order $2$ polynomial,
\item $Q_2 = \Lambda X_1.\Lambda X_0.((X_1\ (X_1\ 4))+(X_1\ X_0))$ is an order $2$ polynomial.
\end{itemize}
\end{exa}

We are now ready to define the class of functions computed by terms admitting an interpretation that is (higher order) polynomially bounded:
\begin{defi}
Let $\FP_i$, $i\geq0$, be the class of polynomial functions at order $i$ that consist in functions computed by a term $\emptyset;\Delta \vdash \M :: \T$ over the basic type ${\Nat}$ and such that:
\begin{itemize}
\item $\ord(\T )=i$,
\item $\interpone{\M}$ is bounded by an order $i$ polynomial (\textsl{i.e.} $\exists P_i,\ \interpone{\M} \leq P_i$).
\end{itemize}
\end{defi}

\begin{exa}
The term $\M$ of Example~\ref{prev} has order $1$ and admits an interpretation bounded by $P_1=\Lambda X_0.6X_0^2+5$. Consequently, $\sem{\M} \in \FP_1$.
\end{exa}

\section{Basic Feasible Functionals}

The class of tractable type 2 functionals has been introduced by Constable and Mehlhorn~\cite{Con73,Meh74}.
It was later named \bff\ for the class of Basic Feasible Functionals and characterized in terms of function algebra~\cite{CK89,IRK01}.
We choose to define the class through a characterization by Bounded Typed Loop Programs from~\cite{IKR02} which extends the original \bff\ to any order.

\subsection{Bounded Typed Loop Programs}
\begin{defi}[BTLP]
A Bounded Typed Loop Program (BTLP) is a non-recursive and well-formed procedure defined by the grammar of Figure~\ref{btlp}.
\end{defi}
The well-formedness assumption is given by the following constraints: Each procedure is supposed to be well-typed with respect to simple types over $\D$, the set of natural numbers. 
When needed, types are explicitly mentioned in variables' superscript. 
Each variable of a BTLP procedure is bound by either the procedure declaration parameter list, a local variable declaration or a lambda abstraction. In a loop statement, the guard variables $v_0$ and $v_1$ cannot be assigned to within $I^*$. In what follows $v_1$ will be called the \emph{loop bound}.

\begin{figure*}[t]
\hrule
\vspace*{0.3cm}
  \begin{align*}
\text{(Procedures)} &\ni P &&::= \textbf{Procedure } v^{\tau_1 \times \ldots \times \tau_n \to \D}( v_1^{\tau_1},\ldots,v_n^{\tau_n})\  P^*\ V\ I^* \textbf{ Return } v_r^\D \\
\text{(Declarations)} &\ni V &&::= \textbf{Var } v_1^\D, \ldots, v_n^\D;\\
\text{(Instructions)} &\ni I &&::= v^\D := E; \ | \ \textbf{Loop } v_0^\D \textbf{ with }v_1^\D \textbf{ do }\{I^* \} \\
\text{(Expressions)} &\ni E &&::=  0 \ | \ 1 \ |\ v^\D \ | \ v_0^\D + v_1^\D \ | \  v_0^\D - v_1^\D \ | \ v_0^\D \# v_1^\D \ | \ v^{\tau_1 \times \ldots \times \tau_n \to \D}(A_1^{\tau_1},\ldots,A_n^{\tau_n})\\
\text{(Arguments)} &\ni A &&::= v \ | \ \lambda v_1,\ldots,v_n.v(v'_1\ldots,v'_m) \quad \text{with }v \notin \{v_1,\ldots,v_n\}
\end{align*}
\hrule
\caption{BTLP grammar}
\label{btlp}
\end{figure*}

The operational semantics of BTLP procedures is standard: parameters are passed by call-by-value. $+$, $-$ and $\#$ denote addition, proper subtraction and smash function (\textsl{i.e.} $x \# y = 2^{\taille{x} \times \taille{y}}$, the size $\taille{x}$ of the number $x$ being the size of its dyadic representation), respectively. Each loop statement is evaluated by iterating $\taille{v_0}$-many times the loop body instruction under the following restriction: if an assignment $v:=E$ is to be executed within the loop body, we check if the value obtained by evaluating $E$ is of size smaller than the size of the loop bound $\taille{v_1}$. If not then the result of evaluating this assignment is to assign $0$ to $v$.

A BTLP procedure $\textbf{Procedure } v^{\tau_1 \times \ldots \times \tau_n \to \D}( v_1^{\tau_1},\ldots,v_n^{\tau_n})\  P^*\ V\ I^* \textbf{ Return } v_r^\D$ computes an order $i$ functional if and only if $\ord(\tau_1 \times \ldots \times \tau_n \to \D)=i$. The order of BTLP types can be defined similarly to the order on types of Definition~\ref{def:order}, extended by $\ord(\tau_1\times \ldots \times \tau_n)=\max_{i=1}^n(\ord(\tau_i))$ and where $\D$ is considered to be a basic type.

\begin{figure}
\hrule
\vspace*{0.3cm}
\begin{lstlisting}[morekeywords={Procedure,Return,Loop,with,do,Var,If,then}]
Procedure SumUp(F$^{(\D\to\D)\to\D}$,x$^\D$)
   Var bnd, maxi, sum, i;
   maxi := 0;
   i := 0;
   Loop x with x do {
      If F($\lambda$z.maxi) < F($\lambda$z.i) {
         maxi := i;
      }
      i := i+1;
   }
   bnd := (x+1)$\#($F($\lambda$z.maxi)+1);
   sum := 0;
   i := 0;
   Loop x with bnd do {
      sum := sum + F($\lambda$z.i);
      i := i+1;
   }
   Return sum$^\D$
\end{lstlisting}
\hrule
\caption{Example BTLP procedure}
\label{fig:sumup}
\end{figure}

An example BTLP procedure is provided in Figure~\ref{fig:sumup}. This example procedure {SumUp} is directly sourced from~\cite{IKR02} and is an order 3 procedure that computes the function: $
F, x \mapsto \sum_{i<|x|}F(\lambda z. i)$. It first computes a bound \verb:bnd: on the result by finding the number \verb:i: for which $F(\lambda z. i)$ is maximal and then computes the sum itself. Note that this procedure uses a conditional statement (\textbf{If}) not included in the grammar but that can be simulated in BTLP (see~\cite{IKR02}). Such a conditional will be explicitly added to the syntax of IBTLP procedures in Definition~\ref{defcheck}.

Let the size of an argument be the number of syntactic elements in it. The size of input arguments is the sum of the size of the arguments in the input.

\begin{defi}[Time complexity]
For a given procedure $P$ of parameters $( v_1^{\tau_1},\ldots,v_n^{\tau_n})$, we define its time complexity $t(P)$ to be a function of type $\bN \to \bN$ that, given an input of type $\tau_1 \times \ldots \times \tau_n$ returns the maximal number of assignments executed during the evaluation of the procedure in the size of the input.
\end{defi}

We are now ready to provide a definition of Basic Feasible Functionals at any order.

\newcommand{\bffi}{{\sc bff}$_i$}
\begin{defi}[\bffi]
For any $i\geq 1$, \bffi\ is the class of order $i$ functionals computable by a BTLP procedure\footnote{As demonstrated in~\cite{IKR02}, all types in the procedure can be restricted to be of order at most $i$ without any distinction.}.
\end{defi}
It was demonstrated in~\cite{IRK01} that \bff$_1$ =  {\sc FPtime} and \bff$_2$=\bff.

\subsection{Safe Feasible Functionals}
\newcommand{\sff}{{\sc {sff}}}
Now we restrict the domain of {\bff}$_i$ classes to inputs in \bff$_k$ for $k<i$, the obtained classes are named {\sff} for Safe Feasible Functionals.

\begin{defi}[{\sff}$_i$]
{\sff}$_1$ is defined to be the class of order $1$ functionals computable by BTLP a procedure and, for any $i\geq  1$, {\sff}$_{i+1}$ is the class of order $i+1$ functionals computable by BTLP a procedure on the input domain {\sff}$_i$. In other words,
\begin{align*}
\text{\sc sff}_1=&\text{\sc bff}_1,\\
\text{\sc sff}_{i+1}=&\text{\sc bff}_{{i+1}_{ \upharpoonright \text{\sc sff}_{i}}},\quad \forall i,\ i \geq 1.
\end{align*}
\end{defi}

This is not a huge restriction since we want an arbitrary term of a given complexity class at order $i$ to compute over terms that are already in classes of the same family at order $k$, for $k<i$. Consequently, programs can be built in a constructive way component by component. Another strong argument in favor of this domain restriction is that the partial evaluation of a functional at order $i$ will, at the end, provide a function in $\bN \longrightarrow \bN$ that is shown to be in {\bff}$_1$ (={\sc FPtime}). 

\section{A characterization of safe feasible functionals of any order}

In this section, we show our main characterization of safe feasible functionals:
\begin{thm}\label{main}
For any order $i \geq 0$, the class of functions in $\FP_{i+1}$ over $\FP_{k}$, $k \leq i$, is exactly the class of functions in {\sff}$_{i+1}$, up to an isomorphism.
In other words, $$\text{\sff}_{i+1} \equiv \FP_{i+1 \upharpoonright (\cup_{k \leq i} \FP_{k})},$$ for all $i \geq 0$, up to an isomorphism.
\end{thm}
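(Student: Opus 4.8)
The plan is to prove the two inclusions $\text{\sff}_{i+1} \subseteq \FP_{i+1 \upharpoonright (\cup_{k \leq i}\FP_k)}$ (completeness of the interpretation method) and $\FP_{i+1 \upharpoonright (\cup_{k \leq i}\FP_k)} \subseteq \text{\sff}_{i+1}$ (soundness), both modulo the isomorphism identifying BTLP data $\D$ (dyadic numbers) with the inductive type $\Nat$ (or a binary-word type) and, at higher order, identifying BTLP functionals with closed terms of the corresponding simple type. I would carry out the argument by induction on $i$, the base case $i=0$ being the known fact that (first-order) polynomial interpretations characterize \FPtime{} (\textsl{i.e.} $\text{\sff}_1 = \text{\sc bff}_1 = \text{\sc FPtime} \equiv \FP_1$), which here follows from Corollary~\ref{coro2} and Lemma~\ref{lem7} together with the standard simulation of polynomial-time Turing machines by term rewriting.

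For \textbf{soundness} ($\FP \subseteq \text{\sff}$), suppose $\emptyset;\Delta \vdash \M :: \T$ with $\ord(\T)=i+1$ and $\interpone{\M}\leq P_{i+1}$ for some higher order polynomial, and that $\M$ is applied to arguments computed by terms in $\cup_{k\leq i}\FP_k$. By the induction hypothesis these arguments correspond to \bff$_k$-functionals, so they are themselves BTLP-definable. By Corollary~\ref{coro2}, the number of reduction steps of $\M\ \vet{\N}$ is $O(\interpone{\M\ \vet{\N}})$, hence bounded by a higher order polynomial in the interpretations of the $\vet{\N}$, which by Lemma~\ref{lem7} are the sizes of the values involved. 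I would then describe a BTLP procedure that simulates the leftmost-outermost evaluation $\Rightarrow$ of $\M\ \vet{\N}$: the \texttt{Loop} construct iterates the one-step reduction, the loop bound being the polynomial time bound extracted from the interpretation, and the bounded assignments are harmless because every intermediate term has size bounded (again via Corollary~\ref{coro1}, $\interpone{\M}\interpone{\vet{\N}} \geq \taille{\M\ \vet{\N}\Rightarrow^k \M'}\oplus\interpone{\M'}$, so $\interpone{\M'}$ and thus $\taille{\M'}$ stay polynomially bounded). Higher order arguments of $\M$ are passed to the procedure as BTLP arguments $A$; the order of the resulting procedure is $i+1$, giving membership in \bff$_{i+1}$, and restricted to the \sff{} input domain, in $\text{\sff}_{i+1}$.

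For \textbf{completeness} ($\text{\sff}\subseteq\FP$), I would follow the route announced in the outline: given a BTLP procedure of order $i+1$, compile it to a closed term $\M$ of the corresponding type and exhibit a higher order polynomial interpretation for it. The compilation handles each syntactic class of Figure~\ref{btlp}: expressions $0,1,v,+,-,\#$ become operators with the obvious constant/polynomial sup-interpretations ($\interpone{\#}$ being $\Lambda X.\Lambda Y.\,2^{X\times Y}$-style, which is still a higher order polynomial bound once we recall sizes, actually $\Lambda X.\Lambda Y.(X\times Y)\oplus c$ at the level of sizes); a \texttt{Loop} $v_0$ with $v_1$ becomes a \texttt{letRec} that recurses on (a unary-length copy of) $v_0$, with the bounded-assignment discipline modeled by a $\sqcap$ with the loop bound; procedure calls and $\lambda$-abstracted arguments $A$ become application and $\lambda$ in the obvious way. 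The crucial invariant is that the loop-bound restriction of BTLP keeps all intermediate values of size at most $\taille{v_1}$, so the \texttt{letRec} interpretation admits a polynomial fixpoint bound (computed as in Example~\ref{prevv} and Example~\ref{prev}, using the relaxation lemmas of Subsection ``Relaxing interpretations''). The arguments of the procedure, being of order $\leq i$ and coming from \sff$_i$, are by the induction hypothesis interpretable by order-$\leq i$ polynomials, and these plug compositionally into the interpretation of $\M$, yielding $\interpone{\M}\leq P_{i+1}$ and hence $\sem{\M}\in\FP_{i+1}$ over $\cup_{k\leq i}\FP_k$.

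The \textbf{main obstacle} is completeness, specifically obtaining a genuinely \emph{polynomial} (higher order) fixpoint bound for the \texttt{letRec} that compiles a \texttt{Loop}: one must show that the homogeneous min-fixpoint of Definition~\ref{def:inter}, which a priori could blow up, is dominated by a higher order polynomial. This requires exploiting the BTLP loop-bound and size-restriction semantics to cap the growth of the recursion, and carefully propagating the polynomial bounds on the (higher order) arguments through nested loops and procedure calls without the degree exploding — the same phenomenon that in the first-order case forces the ``no recursion on the output of recursion'' style restriction, here enforced implicitly by the size-bounded assignments. A secondary, more bookkeeping-heavy difficulty is making the isomorphism between BTLP functionals and typed terms precise at every order (encoding dyadic notation, handling the difference between BTLP's simultaneous-argument procedures and curried application, and checking that $\lambda v_1,\ldots,v_n.v(v'_1,\ldots,v'_m)$ arguments compile to terms whose interpretation stays polynomial), together with verifying that the simulation in the soundness direction respects the \bff{} cost model (counting assignments) up to the polynomial fudge allowed by Corollary~\ref{coro2}.
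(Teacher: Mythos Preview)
Your high-level decomposition (soundness + completeness, induction on $i$, base case $\text{\sc bff}_1=\text{\sc FPtime}\equiv\FP_1$) matches the paper exactly, and your soundness sketch is essentially the paper's Theorem~\ref{th:sound}: bound the number of $\Rightarrow$-steps by $\interpone{\M}\interpone{\vet{\N}}$ via Corollary~\ref{coro2}, then compose the order-$(i{+}1)$ polynomial for $\M$ with the order-$\leq i$ polynomials for the inputs. (The paper does not actually spell out the BTLP simulator you describe; it stops at the polynomial step bound and invokes the known equivalence with \bff$_{i+1}$, so your version is if anything more detailed there.)

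The real divergence is in completeness. You propose a direct compilation of BTLP into terms and correctly flag that the hard part is getting a \emph{polynomial} fixpoint for the \texttt{letRec} that encodes a \texttt{Loop}, especially under nesting. The paper solves precisely this obstacle with machinery you do not have: it first passes through an intermediate language IBTLP equipped with an explicit $\ckbd(E,X)$ operator whose sup-interpretation is simply $\Lambda X.\Lambda Y.\,Y$ (Lemma~\ref{si}), so every compiled assignment is interpreted by the loop bound rather than by the computed value; it then \emph{flattens} all nested and sequential loops and unfolds all procedure calls (the Unnest/Unseq/Unfold transformations of Lemma~\ref{fclbtlp1}--\ref{fclbtlp2}), so that only a single non-nested \texttt{letRec} survives per procedure; finally it compiles this flat, local form and closes with an appeal to \emph{first-order} completeness of polynomial interpretations (via defunctionalization of the tuple-carrying state). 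Your ``model the bounded-assignment discipline by a $\sqcap$ with the loop bound'' is in the right spirit but does not by itself give you the crucial $\interpone{\ckbd}=\Lambda X.\Lambda Y.\,Y$ collapse, and without flattening you would still face an iterated fixpoint whose polynomial bound is not obvious. In short: you have identified the obstacle accurately, but the paper's specific devices---the $\ckbd$ operator with its trivial sup-interpretation, the loop-flattening pipeline, and the reduction to the first-order case---are the missing ideas that make the completeness proof go through.
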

\begin{proof}
For a fixed $i$, the theorem is proved in two steps: Soundness, Theorem~\ref{th:sound}, and Completeness, Theorem~\ref{th:comp}.
Soundness consists in showing that any term $\M$ whose interpretation is bounded by an order $i$ polynomial, computes a function in \sff$_i$. Completeness consists in showing that any BTLP procedure $P$ of order $i$ can be encoded by a term $\M$ computing the same function and admitting a polynomial interpretation of order $i$.
\end{proof}

Notice that functions in {\sff}$_{i+1}$ return the dyadic representation of a natural number. Consequently, the isomorphism is used on functions in $\FP_i$ to illustrate that a function of type $\bN \to (\bN \to \bN)$ and order $1$ is isomorphic to a functional of type $\bN \to \bN$ and of the same order using decurryfication and pair encoding over $\bN$. In order to simplify the treatment we will restrict ourselves to \emph{functional terms} computing functionals that are terms of type $\T \longrightarrow \texttt{b}$, with $\texttt{b} \in \B$, in the remaining of the paper.

\subsection{Soundness}

The soundness means that any term whose interpretation is bounded by an order $i$ polynomial belongs to \sff$_i$. For that, we will show that the interpretation allows us to bound the computing time with an higher order polynomial.

\begin{thm}\label{th:sound}
Any functional term $\M$ whose interpretation is bounded by an order $i$ polynomial, computes a functional in \sff$_i$.
\end{thm}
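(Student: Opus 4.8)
The plan is to prove soundness by carefully quantifying the cost bound that the interpretation already provides, and then showing that this cost bound fits the BTLP model at the right order. First I would fix a functional term $\M :: \T_1 \longrightarrow \cdots \longrightarrow \T_m \longrightarrow \texttt{b}$ with $\ord(\T_1 \longrightarrow \cdots \longrightarrow \texttt{b}) = i$ and $\interpone{\M} \leq P_i$ for some order $i$ polynomial $P_i$. Given inputs $\vet{\N}$ of the appropriate types — which we may assume compute functionals in $\cup_{k<i}\FP_k$, hence themselves admit polynomial interpretations $\interpone{\N_j} \leq Q_j$ of order $\leq i-1$ — I would apply Corollary~\ref{coro2} to the term $\M\ \vet{\N}$ of basic type. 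Since $\interpone{\M\ \vet{\N}} = \interpone{\M}\,\interpone{\vet{\N}} \leq P_i\ \vet{Q}$ and the latter is a composition of higher order polynomials, it is itself bounded by an order $i$ polynomial in the sizes of the first-order inputs (using Lemma~\ref{lem7} to identify $\interp{\vone}$ with $\taille{\vone}$ for value arguments); in particular it is never $\top$, so $\M\ \vet{\N}$ terminates, and by Corollary~\ref{coro2} the number of reduction steps — hence, up to polynomial overhead for the cost of operators, the total computation time — is $O(P_i\ \vet{Q})$, an order $i$ polynomial bound.

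Next I would convert this ``runs in higher-order polynomial time'' statement into membership in $\sff_i$. The natural route is to appeal to the robustness of the $\bff_i$/$\sff_i$ classes: one exhibits an abstract machine (e.g. a leftmost-outermost reduction machine, or a CEK/Krivine-style machine) that evaluates $\M\ \vet{\N}$, show that each reduction step costs polynomially bounded work, and that the machine's running time is bounded by an order $i$ polynomial in the input. Then invoke the characterization results recalled in Section~4 (that $\bff_i$ coincides with BTLP-computability and is robust under reasonable machine models) to conclude the computed functional lies in $\bff_i$, and, since its arguments are restricted to lie in $\cup_{k<i}\FP_k \subseteq \cup_{k<i}\sff_k$, in fact in $\sff_i$. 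Care is needed here because the cost must be polynomial in the size of the first-order inputs only, while behaving ``oracle-like'' in the higher-order inputs; the fact that $P_i$ has arguments of order at most $i-1$, fed with polynomial interpretations of the actual arguments, is exactly what makes the resulting bound an order $i$ polynomial in the BTLP sense.

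The main obstacle I anticipate is the faithful bookkeeping of the machine simulation: one must show that representing and manipulating the (possibly large, possibly higher-order) intermediate terms during leftmost-outermost reduction does not blow up beyond the interpretation bound — i.e. that the \emph{size} of every intermediate term, not merely the number of steps, stays polynomially controlled, which is where Corollary~\ref{coro1} (giving $\interpone{\M\ \vet{\N}} \geq \taille{\M\ \vet{\N} \Rightarrow^k \M'}\oplus \interpone{\M'}$, and hence a bound on every reachable $\M'$) does the essential work. A secondary subtlety is matching the data representations: BTLP works with dyadic natural numbers and the functional language with unary $\Nat$ or with list-encoded data, so the isomorphism mentioned after Theorem~\ref{main} (currying/pairing and change of numeral representation) must be threaded through so that a polynomial bound on one side transfers to the other. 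I would therefore structure the proof as: (1) interpretation bound $\Rightarrow$ polynomial step and size bound via Corollaries~\ref{coro1} and~\ref{coro2}; (2) polynomial-time abstract machine simulation; (3) transfer through the data isomorphism and appeal to the robustness of $\bff_i$ to land in $\sff_i$.
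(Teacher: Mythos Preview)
Your proposal is correct and shares the paper's core idea: use Corollary~\ref{coro2} (and Lemma~\ref{lem7} for the base case) to turn the polynomial interpretation bound into a higher-order polynomial bound on the number of reduction steps, then conclude membership in $\sff_i$. The paper, however, is considerably terser. It proceeds by an explicit induction on the order: at order~$1$ the step bound $\interpone{\M}(|\vone|)\leq P_1(|\vone|)$ directly yields membership in $\text{\sc FPtime}=\sff_1$; at order $i{+}1$ one assumes the single input $\N$ already lies in $\sff_i$, so by induction $\interpone{\N}\leq P_i$, and the composed bound $\interpone{\M\ \N}\leq P_{i+1}\circ P_i$ is again an order $i{+}1$ polynomial. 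The paper then stops there, implicitly invoking the known characterization of $\bff_i$ as the functionals computable in time bounded by an order~$i$ polynomial, rather than building an abstract machine, tracking intermediate term sizes via Corollary~\ref{coro1}, or threading the numeral/currying isomorphism through explicitly. Your additional bookkeeping is sound and would make the argument more self-contained, but it is substantially more work than the paper actually carries out; the paper's proof is essentially two lines per order relying on the robustness results recalled in Section~4.
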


\begin{proof}
For order $1$, consider that the term $\M$ has an interpretation bounded by a polynomial $P_1$. For any value $\vone$, we have, by Corollary~\ref{coro1}, that the computing time of $\M$ on input $\vone$ bounded by $\interpone{\M\ \vone}$. Consequently, using Lemma~\ref{lem7}, we have that:
$$\interpone{\M\ \vone} = \interpone{\M} \interpone{\vone} = \interpone{\M}(|\vone|) \leq P_1(|\vone|).$$

Hence $\M$ belongs to {\sc FPtime} $= \text{\sff}_1$.

Now, for higher order, let $\M$ be an order $i+1$ term of interpretation $\interpone{\M}$. There exists an order $i+1$ polynomial $P_{i+1}$ such that $\interpone{\M} \leq P_{i+1}$. We know that on input $\N$, $\M$ normalizes in $O(\interpone{\M\ \N})$, by Corollary~\ref{coro2}. Since $\N$  computes a functional $\llbracket \N \rrbracket \in \text{\sff}_{i}$ there is a polynomial $P_i$ such that $\interpone{\N} \leq P_i$, by induction on the order $i$. Consequently, we obtain that the maximal number of reduction steps is bounded polynomially in the input size by:
$$\interpone{\M\ \N} = \interpone{\M}\interpone{\N} \leq P_{i+1} \circ P_i,$$
that is, by a polynomial $Q_{i+1}$ of order $i+1$ defined by $Q_{i+1}=P_{i+1} \circ P_i$.
\end{proof}
The above result holds for terms over arbitrary basic inductive types, by considering that each value on such a type encodes an integer value.

\subsection{Completeness}

To prove that all functions computable by a BTLP program of order $i$ can be defined as terms admitting a polynomial interpretation, we proceed in several steps:
\begin{enumerate}
\item We show that it is possible to encode each BTLP procedure $P$ into an intermediate procedure $\transform{P}{}$ of a language called IBTLP (See Figure~\ref{lbtlp}) for If-Then-Else Bounded Typed Loop Program such that $\transform{P}{}$ computes the same function as $P$ using the same number of assignments (\textsl{i.e.} with the same time complexity). 
\item We show that we can translate each IBTLP procedure $\transform{P}{}$ into a flat IBTLP procedure $\overline{\transform{P}{}}$, \textsl{i.e.} a procedure with no nested loops and no procedure calls. 
\item Then we transform the flat IBTLP procedure $\overline{\transform{P}{}}$ into a ``local'' and flat IBTLP procedure $[\overline{\transform{P}{}}]_\emptyset$ checking bounds locally in each assignment instead of checking it globally in each loop. For that purpose, we use a polynomial time computable operator of the IBTLP language called $\ckbd$. The time complexity is then asymptotically preserved. 
\item Finally, we compile the IBTLP procedure $[\overline{\transform{P}{}}]_\emptyset$ into a term of our language and we use completeness for first order function to show that there is a term computing the same function and admitting a polynomial interpretation. This latter transformation does not change the program behaviour in terms of computability and complexity, up to a O, but it makes the simulation by a functional program easier as each local assignment can be simulated independently of the context.
\end{enumerate}
The 3 first steps just consist in transforming a BTLP procedure into a IBTLP procedure in order to simplify the compilation procedure of the last step. 
These steps can be subsumed as follows:
\begin{center}
\begin{tikzpicture}[node distance=3cm,
fleche/.style={->},
box/.style={draw,rounded corners,text width=11mm},
carre/.style={draw,minimum size=1cm}]
\node(a){Program};
\node (b) [node distance=2cm, right of=a]{$P$};
\node (c) [node distance=2cm, right of=b] {$\transform{P}{}$};
\node (d) [node distance=2cm, right of=c] {$\overline{\transform{P}{}}$};
\node (e) [node distance=2cm, right of=d] {$[\overline{\transform{P}{}}]_\emptyset$};
\node (f) [node distance=3cm, right of=e] {$\pcmp([\overline{\transform{P}{}}]_\emptyset)$};

\node[node distance=1cm] (a0) [above of=a] {Language};
\node[node distance=1cm] (b0) [above of=b] {BTLP};
\node[node distance=1cm] (c0) [above of=c] {IBTLP};
\node[node distance=1cm] (d0) [above of=d] {IBTLP};
\node[node distance=1cm] (e0) [above of=e] {IBTLP};
\node[node distance=1cm] (f0) [above of=f]  {$\mathcal{T}$};

\draw[fleche] (b) to node[above] {$\transform{}{}$} (c);
\draw[fleche] (c) to node[above] {flat} (d);
\draw[fleche] (d) to node[above] {local} (e);
\draw[fleche] (e) to node[above] {compile} (f);

\end{tikzpicture}
\end{center}
For each step, we check that the complexity in terms of reduction steps is preserved and that the transformed program computes the same function.

\subsubsection{From BTLP to IBTLP}
\begin{defi}[IBTLP]\label{defcheck}
A If-Then-Else Bounded Typed Loop Program (IBTLP) is a non-recursive and well-formed procedure defined by the grammar of Figure~\ref{lbtlp}.
\end{defi}

\begin{figure*}[t]
\hrule
\vspace*{0.3cm}
  \begin{align*}
   IP &::=  v^{\tau_1 \times \ldots \times \tau_n \to \D}( v_1^{\tau_1},\ldots,v_n^{\tau_n})\{IP^* IV II^* \}\textbf{ ret } v_r^\D\\
 IV &::= \textbf{var } v_1^\D, \ldots, v_n^\D;\\
 II&::= v^\D := IE^\D; \ |\  \textbf{loop } v_0^\D \textbf{ \{}II^* \textbf{\}} \ |\ \textbf{if}\ v^\D\ \textbf{ \{}\ II^* \textbf{\}}\ [\textbf{else \{}\ II^*\textbf{\}}] \ | \  (II^*)_{v^{\D}}\\
 IE &::= 0 \ | \ { 1} \ | \  v^\D \ | \ v_0^\D + v_1^\D \ | \  v_0^\D - v_1^\D \ | \ v_0^\D \# v_1^\D \ | \ v^{\tau_1 \times \ldots \times \tau_n \to \D}(IA_1^{\tau_1},\ldots,IA_n^{\tau_n}) \ | \ v_0^\D \times v_1^\D \ | \ \cut(v^\D)   \\
 &\phantom{::=}\ | \ \ckbd(E,X)\\
 IA &::= v \ | \ \lambda v_1,\ldots,v_n.v(v'_1\ldots,v'_m) \quad \text{with }v \notin \{v_1,\ldots,v_n\}
\end{align*}
\hrule
\caption{IBTLP grammar}
\label{lbtlp}
\end{figure*}

The well-formedness assumption and variable bounds are the same as for BTLP. In a loop statement, the guard variable $v_0$ still cannot be assigned to within $II^*$.
A IBTLP procedure $IP$ has also a time complexity $t(IP)$ defined similarly to the one of BTLP procedures. 

The main distinctions between an IBTLP procedure and a BTLP procedure are the following:
\begin{itemize}
\item there are no loop bounds in IBTLP loops. Instead loop bounds are written as instruction annotations: a IBTPL loop $(\textbf{loop } v_0^\D \textbf{ \{}II^*\textbf{\}})_{v^{\D}}$ corresponds to a BTLP instruction of the shape $$\textbf{Loop } v_0^\D \textbf{ with }v^{\D} \textbf{ do }\{II^* \}.$$ 
\item \itlp\ includes a conditional statement $\textbf{if}\ v^\D\ \textbf{\{}\ II_1^*\ \textbf{\} else \{}\ II_2^*\textbf{\}}$ evaluated in a standard way: if variable $v$ is $0$  then it is evaluated to $II_2^*$. In all other cases, it is evaluated to $II_1^*$, the else branching being optional.
\item \itlp\ includes a basic operator $\times$ such that $x \times y = 2^{\taille{x}+\taille{y}}$.
\item \itlp\ includes a unary operator $\cut$ which removes the first character of a number (\textsl{i.e.} $\cut(0)$ = $0$, $\cut(2x+i)=x$ where $i \in \{0,1\}$). 
\item \itlp\ includes an operator $\ckbd$ computing the following function:
\begin{align*}
\ckbd(E,X) =\left\{
\begin{array}{cr}
\sem{E}, &\text{if } \taille{\sem{E}} \leq  \taille{x}, x \in X, \\
0, &\text{otherwise},
\end{array}
\right.
\end{align*}
where $\sem{E}$ is the dyadic number obtained after the evaluation of expression $E$ and $X$ is a finite set of variables.
\end{itemize}
Notice that $\ckbd$ is in {\sff}$_1$ provided that the input $E$ is computable in polynomial time and both $\times$ and $\verb!cut!$ are in {\sff}$_1$.
The semantics of an IBTLP procedure is also similar to the one of a BTLP procedure: during the execution of an assignment, the bound check is performed on instruction annotations instead of being performed on loop bounds. However IBTLP is strictly more expressive than BTLP from an extensional perspective:  a loop can be unbounded. This is the reason why only IBTLP procedures obtained by well-behaved transformation from BTLP procedures will be considered in the remainder of the paper.

Now we define a program transformation $\transform{.}{}$ from BTLP to IBTLP.
For each loop of a BTLP program, this transformation just consists in recording the variable appearing in the $\textbf{ with }$ argument of a contextual loop and putting it into an instruction annotation as follows:
$$\transform{\textbf{Loop } v_0^\D \textbf{ with }v_1^\D \textbf{ do }\{I^* \}}{} = (\textbf{loop } v_0^\D \textbf{ \{}\transform{I}{}^*\textbf{\}})_{v_1^\D}.$$
Any assignment is left unchanged and this transformation is propagated inductively on procedure instructions so that any inner loop is transformed. We denote by $\transform{P}{}$ the IBTLP procedure obtained from the BTLP procedure $P$.
Hence the following lemma straightforwardly holds:
\begin{lem}\label{8}
Given a procedure $P$, let $\sem{P}$ denote the function computed by $P$. For any BTLP procedure $P$, we have $\sem{P} = \sem{\transform{P}{}} \text{ and } t(P) = t(\transform{P}{})$.
\end{lem}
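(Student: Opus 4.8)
The plan is to verify the two claimed equalities separately, each by a routine structural induction on the instruction structure of the BTLP procedure $P$, exploiting the fact that the transformation $\transform{.}{}$ is the identity on every syntactic construct except for the loop, where it merely moves the loop-bound variable from the \textbf{with} position of the BTLP loop into the instruction annotation of the corresponding IBTLP loop. First I would set up an auxiliary statement at the level of instruction sequences rather than whole procedures: for any instruction block $I^*$ and any memory state $\sigma$, executing $I^*$ from $\sigma$ in BTLP and executing $\transform{I}{}^*$ from $\sigma$ in IBTLP produce the same resulting state, and the number of assignments performed is identical. This is the natural induction hypothesis, since procedure calls, \textbf{Return}, and variable declarations are handled transparently by $\transform{.}{}$.

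For the functional correctness part ($\sem{P} = \sem{\transform{P}{}}$), the induction on instructions has only one non-trivial case, the loop. Here one must compare the BTLP semantics of $\textbf{Loop } v_0 \textbf{ with } v_1 \textbf{ do }\{I^*\}$ — iterate $\taille{v_0}$ times, and on each assignment $v := E$ inside the body clamp to $0$ whenever $\taille{\sem{E}} > \taille{v_1}$ — against the IBTLP semantics of $(\textbf{loop } v_0\ \textbf{\{}\transform{I}{}^*\textbf{\}})_{v_1}$, whose semantics is defined precisely to iterate $\taille{v_0}$ times and perform the same bound check against $\taille{v_1}$ on each annotated assignment. These two coincide by construction, and the inner block is handled by the induction hypothesis; assignment and the (BTLP-simulable) conditional are immediate. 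The base cases (the empty block, a single assignment) are trivial.

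For the complexity part ($t(P) = t(\transform{P}{})$), I would carry the "same number of assignments executed" bookkeeping through the very same induction: the transformation introduces no new assignments and removes none, the loop is iterated the same number of times ($\taille{v_0}$) in both systems, and the clamping-to-$0$ behaviour — which still counts as one assignment in either semantics — is preserved. Since $t$ is defined as the maximum over inputs of the assignment count, equality of the per-run counts yields equality of $t$.

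Honestly there is no real obstacle here; the only thing requiring a modicum of care is making sure the induction hypothesis is phrased over instruction blocks together with an arbitrary ambient state (so that nested loops and the inductive propagation of $\transform{.}{}$ through sub-blocks go through cleanly), and checking that the IBTLP semantics of the annotated loop was indeed defined so as to mirror the BTLP bounded loop exactly — which is exactly what the bullet points preceding the lemma assert. Hence the lemma is, as stated, straightforward, and I would simply remark that both claims follow by induction on the structure of instructions, the loop case being immediate from the definition of $\transform{.}{}$ and of the IBTLP loop semantics.
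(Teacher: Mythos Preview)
Your proposal is correct and follows essentially the same approach as the paper's proof, which is extremely terse: the paper simply observes that the transformation is semantics-preserving, that assignments are in one-to-one correspondence, and that both loop constructs iterate $\taille{v}$ times. Your version spells out the structural induction and the auxiliary statement on instruction blocks more carefully, but the underlying argument is identical.
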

\begin{proof}
The transformation is semantics-preserving (the computed function is the same).
Any assignment in $P$ corresponds to exactly one assignment in $\transform{P}{}$ and the number of iterations of the loop instructions $\textbf{Loop } v \textbf{ with } \ldots$ and $\textbf{loop } v$ are both in $\taille{v}$.
\end{proof}

\subsubsection{From IBTLP to Flat IBTLP}

\begin{defi}[Flat IBTLP]\label{deffclbtlp}
A If-Then-Else Bounded Typed Loop Program (IBTLP) is \emph{flat} if it does not contain nested loops.

\end{defi}
We will show that it is possible to translate any IBTLP procedure into a Flat IBTLP procedure using the \verb!if! construct. 

There are only three patterns of transformation: 
\begin{enumerate}
\item one pattern for nested loops, called \emph{Unnest},
\item  one pattern for sequential loops, called \emph{Unseq},
\item and one for procedure calls inside a loop, called \emph{Unfold},
\end{enumerate}
that we describe below:
\begin{enumerate}
\item The first transformation \emph{Unnest} consists in removing a nested loop of a given procedure. Assume we have a IBTLP procedure with two nested loops:

\begin{lstlisting}
(loop x { $II_1^*$ (loop y {$II_2^*$})$_z$ $II_3^*$})$_\text{w}.$
\end{lstlisting}

We can translate it to a IBTLP procedure as:

\begin{lstlisting}
total := w$\times$y; dx := 1; gt := total; gy := y; 
lb := x$\#$total;
loop lb {
  if dx {($II_1^*$)$_\text{w}$}
  if gy {(($II_2^*$)$_\text{z}$)$_\text{w}$ gy := cut(gy);}
  if dx {dx := $0$; ($II_3^*$)$_\text{w}$}
  if gt {gt := cut(gt);}
  else {gt := total; gy := y; dx := 1;}
}
\end{lstlisting}
where all the variables \verb!dx!, \verb!total!, \verb!gt!, \verb!gy! and \verb!lb! are fresh local variables and $II_1^*$, $II_2^*$ and $II_3^*$ are instructions with no loop.
The intuition is that variables \verb!dx! and \verb!gy! tell whether the loop should execute $II_1^*$ and $II_2^*$ respectively. Variable \verb!gt! counts the number of times to execute the sub-loop.

\item The second transformation \emph{Unseq} consists in removing two sequential loops of a given procedure. Assume we have a IBTLP procedure with two sequential loops: 

\begin{lstlisting}
(loop x$_1$ {$II_1^*$})$_{\text{w}_1}$ $II_2^*$ (loop x$_3$ {$II_3^*$})$_{\text{w}_3}.$
\end{lstlisting}

We can translate it to a single loop IBTLP procedure as:

\begin{lstlisting}
gx := x1; dy := 1; lb := x$_1 \times$x$_3$; 
loop lb {
  if gx {gx := cut gx; ($II_1^*$)$_{\text{w}_1}$} 
  else  {
    if dy {$II_2^*$}
    else {dy := $0$; ($II_3^*$)$_{\text{w}_3}$}
  }
}
\end{lstlisting}
where \verb!gx!, \verb!dy! and \verb!lb! are fresh local variables and $II_1^*$, $II_2^*$ and $II_3^*$ are instructions with no loop.

\item The last transformation \emph{Unfold} consists in removing one procedure call of a given procedure. Assume we have a IBTLP procedure with a call to procedure $P$ of arity $n$: $X \verb! := ! P(IA_1,...,IA_n)$;

We can translate it to a computationally equivalent IBTLP procedure after removing the call to procedure $P$. For that purpose, we carefully alpha-rename all the variables of the procedure declaration (parameters and local variables) to obtain the procedure $P( v_1,\ldots,v_n) \{IP^* IV  II^*\}$ $\textbf{ ret } v_r$ then we add the procedure declarations $IP^*$ to the caller procedure list of procedure declarations, and the local variables $IV$ and parameters $v_1,\ldots,v_n$ to the caller procedure list of local variable declarations and then we generate the following code:

\begin{lstlisting}
$v_1$ := $IA_1$; $\ldots$ $v_n$ := $IA_n$; $II^*$ $X \verb! := ! v_r$;
\end{lstlisting}
This program transformation can be extended straightforwardly to the case where the procedure call is performed in a general expression. Notice that unfolding a procedure is necessary as nested loops may appear because of procedure calls.
\end{enumerate}

These three patterns can be iterated on a IBTLP procedure (from top to bottom) to obtain a Flat IBTLP procedure in the following way:
\begin{defi}
The transformation $\overline{IP}$ is a mapping from IBTLP to IBTLP defined by:
$$\overline{IP}=(\emph{Unnest}^! \circ \emph{Unseq}^!)^! \circ \emph{Unfold}^! (IP),$$
where, for a given function f, $f^!$ is the least fixpoint of $f$ on a given input and $\circ$ is the usual mapping composition. 
\end{defi}
Notice that this procedure is polynomial time computable as each application of a call to \emph{Unfold} consumes a procedure call (and procedures are non recursive) and each application of an \emph{Unnest} or \emph{Unseq} call consumes one loop. Consequently, fixpoints always exist.

\begin{lem}\label{fclbtlp1}
For any BTLP procedure $P$,  $\overline{\transform{P}{}}$ is a flat IBTLP procedure.
\end{lem}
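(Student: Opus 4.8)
The plan is to read off the flatness of $\overline{\transform{P}{}}$ directly from the definition $\overline{IP}=(\emph{Unnest}^! \circ \emph{Unseq}^!)^! \circ \emph{Unfold}^! (IP)$, by maintaining one structural invariant through all three transformations. Well-definedness of $\overline{\cdot}$ is already granted by the remark preceding the lemma (each \emph{Unfold} step consumes a call of a non-recursive hierarchy, each \emph{Unnest} or \emph{Unseq} step consumes a loop, and inspecting their right-hand sides, none of them creates a new procedure call nor more than the single loop it introduces), so the only real task is to show that no loop of the resulting procedure contains another loop.

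First I would fix the shape of $\transform{P}{}$ and of $IP_1 := \emph{Unfold}^!(\transform{P}{})$. The BTLP grammar of Figure~\ref{btlp} has no conditional, so neither $\transform{P}{}$ nor any of the (again BTLP-derived) procedures it may call contains an \textbf{if}; hence every loop in $\transform{P}{}$ occurs as one item of some instruction sequence (possibly carrying a bound annotation), never inside an \textbf{if} branch. Inlining, as \emph{Unfold} does, merely splices an instruction sequence of the same form in place of an assignment, so $IP_1$ still has no \textbf{if}, no procedure call, and all its loops still occur at the top level of instruction sequences. I would record this as the invariant $(\ast)$: no procedure calls; every loop occurs at the top level of an instruction sequence; every \textbf{if} branch occurring in the procedure has a loop-free body. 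Then $(\ast)$ holds for $IP_1$, and a glance at the two rewrite patterns shows that \emph{Unnest} and \emph{Unseq} preserve $(\ast)$, since each replaces a fragment by code whose single loop has a loop-free body and whose newly created \textbf{if}'s wrap only the loop-free blocks $II_1^*$, $II_2^*$, $II_3^*$.

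The core step is that the fixpoint $IP^* := \overline{\transform{P}{}}$ of $\emph{Unnest}^! \circ \emph{Unseq}^!$ is flat, which I would prove by contradiction. Suppose some loop of $IP^*$ contains a loop; choose such an $L$ minimal for containment, so that every loop strictly inside $L$ is itself loop-free. By $(\ast)$ those inner loops sit at the top level of $L$'s body and are the only loops there. If there are at least two of them, the \emph{Unseq} pattern matches a subsequence of $L$'s body (up to loop-free code before, between and after the two loops); if there is exactly one, $L$'s body has the form $II_1^*\,(\textbf{loop}\ \dots)_z\,II_3^*$ with $II_1^*$, $II_2^*$, $II_3^*$ loop-free, i.e.\ the \emph{Unnest} pattern matches. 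In either case $\emph{Unnest}^! \circ \emph{Unseq}^!$ is not the identity on $IP^*$, contradicting that $IP^*$ is its fixpoint. Hence $IP^*$ has no nested loop; and since every transformation step sends IBTLP procedures to IBTLP procedures (the fresh local variables introduced preserve well-formedness), $\overline{\transform{P}{}}$ is a flat IBTLP procedure.

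The delicate point — and essentially the only place where care is needed — is the bookkeeping of \emph{where} a loop can appear: one must rule out a nested loop escaping both patterns by hiding inside an \textbf{if} branch or a nontrivial annotation block, and one must ensure the loops produced by \emph{Unnest} and \emph{Unseq} carry a (trivial) bound annotation so that the \emph{Unseq} and \emph{Unnest} patterns keep matching them when they later become sub-loops of an outer loop. Isolating the invariant $(\ast)$ (extended with ``every loop carries an annotation'') and threading it through \emph{Unfold}, \emph{Unnest} and \emph{Unseq} is exactly what makes the argument a routine case analysis rather than an error-prone one.
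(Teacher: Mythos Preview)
Your approach differs from the paper's. The paper gives a brief counting argument: each application of \emph{Unnest} or \emph{Unseq} removes exactly one loop from the procedure, so the iterated fixpoint has at most one loop (the paper's ``smaller than~1'' is presumably a slip for ``at most~1'') and is therefore trivially flat. You instead isolate a structural invariant $(\ast)$ and argue by minimality and contradiction. Your route is the more rigorous one: in particular you make explicit why a loop cannot hide inside an \textbf{if} branch and thereby escape both rewrite patterns, a point the paper's proof does not address at all even though \emph{Unnest} and \emph{Unseq} do introduce \textbf{if} constructs.

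The annotation issue you flag in the last paragraph is genuine, and it is a gap both in your argument and in the paper's. Your contradiction step needs the minimal offending loop $L$ and its inner loop(s) to carry bound annotations so that the \emph{Unnest} (resp.\ \emph{Unseq}) pattern matches them. You propose to handle this by extending $(\ast)$ with ``every loop carries an annotation'' and threading it through; but look at the right-hand sides of \emph{Unnest} and \emph{Unseq} in the paper: the single loop each produces is written \texttt{loop lb \{\ldots\}} with no surrounding $(\cdot)_{v}$. Hence the extended invariant is \emph{not} preserved by the transformations as literally written, and after one round of $\emph{Unnest}^{!}\circ\emph{Unseq}^{!}$ the resulting loop, if still sitting inside an outer annotated loop, is un-annotated and cannot be matched by either pattern. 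The repair is not to the proof but to the transformations themselves: wrap the produced loop in some annotation, e.g.\ $(\texttt{loop lb }\{\ldots\})_{\texttt{lb}}$, after which your extended invariant does thread through and your minimality argument goes through cleanly. The paper's counting proof has exactly the same blind spot---its implicit claim that whenever two loops remain one of the patterns applies silently assumes every loop is annotated---so you are not missing something the paper supplies; you have simply spotted a wrinkle the paper overlooks.
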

\begin{proof}
First notice that repeated application of the \emph{Unfold} pattern may only introduce a constant number of new loops as procedures are non-recursive. Consequently, the fixpoint $\emph{Unfold}^!$ is defined and reached after a constant number (in the size of the procedure) of applications. Each application of a $\emph{Unseq}$ pattern or $\emph{Unnest}$ pattern decreases by one the number of loop within a procedure. Consequently, a fixpoint is reached (modulo $\alpha$-equivalence) and the only programs for which such patterns cannot apply are programs with a number of loops smaller than 1.
\end{proof}

\begin{lem}\label{fclbtlp2}
For any BTLP procedure $P$, we have $ \sem{\overline{\transform{P}{}}} = \sem{P}\text{ and } t(\overline{\transform{P}{}}) = O(t(P))$.

\end{lem}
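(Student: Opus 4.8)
The plan is to split the two claims and handle flattening pattern by pattern. The identity $\sem{P}=\sem{\transform{P}{}}$ together with $t(P)=t(\transform{P}{})$ is already Lemma~\ref{8}, so it remains to prove that $IP \mapsto \overline{IP}$ preserves the computed function exactly and multiplies the time complexity by at most a constant. Since $\overline{IP}=(\emph{Unnest}^! \circ \emph{Unseq}^!)^! \circ \emph{Unfold}^!(IP)$, I would first establish a \emph{one-step correctness} statement for each of the three patterns: a single application of \emph{Unfold} (resp.\ \emph{Unseq}, resp.\ \emph{Unnest}) yields an IBTLP procedure computing the same function, and whose number of executed assignments on any input differs from that of the source by at most an additive constant (for \emph{Unfold}, the $n$ parameter-passing assignments and the return assignment) or a multiplicative constant (for \emph{Unseq} and \emph{Unnest}). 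Then, exactly as in the proof of Lemma~\ref{fclbtlp1}, \emph{Unfold} is applied only a bounded number of times (procedures are non-recursive, each application consumes one call) and each \emph{Unseq}/\emph{Unnest} strictly decreases the number of loops, so $\overline{(\cdot)}$ is a composition of a number of pattern applications that is constant in the size of $P$; composing a constant number of constant-factor time bounds gives $t(\overline{\transform{P}{}})=O(t(P))$, while exact semantic preservation is trivially closed under composition, yielding $\sem{\overline{\transform{P}{}}}=\sem{\transform{P}{}}=\sem{P}$.

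For the one-step correctness of each pattern I would argue operationally, by exhibiting loop invariants for the fresh bookkeeping variables and an induction on the number of iterations of the single merged loop. For \emph{Unnest}, the invariant records that within each ``super-round'' of the merged loop the variable \texttt{dx} is nonzero exactly on the iteration that must run $II_1^*$ and $II_3^*$, that \texttt{gy} counts down from $y$ so that $II_2^*$ runs exactly $\taille{y}$ times, and that \texttt{gt} counts down from $w \times y$ and, on reaching $0$, triggers the \texttt{else} branch that resets \texttt{dx}, \texttt{gy}, \texttt{gt} for the next super-round; one checks that the merged loop bound $x \# (w\times y)$, of size $\taille{x}\cdot\taille{w\times y}+1$, is large enough to accommodate the $\taille{x}$ super-rounds required by the source, and that the store restricted to the original variables visits the same sequence of values, with the doubled annotation $((II_2^*)_z)_w$ understood so that the innermost bound $z$ governs the in-loop bound check (matching the BTLP convention that a nested assignment is checked against its innermost enclosing loop bound). \emph{Unseq} is the same argument with the simpler invariant built around \texttt{gx} and \texttt{dy} and the bound $x_1\times x_3$, and \emph{Unfold} reduces to the standard correctness of inlining a non-recursive procedure after $\alpha$-renaming its parameters and locals apart from the caller's variables.

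I expect the main obstacle to be the time-complexity bookkeeping for \emph{Unnest}. Semantic correctness is routine once the invariants are written down, but the complexity claim requires checking that each iteration of the merged loop performs only a constant number of assignments (the guarded blocks $II_j^*$ are loop-free, hence of constant length, plus the handful of \texttt{cut} and reset assignments) \emph{and} that the total number of merged-loop iterations stays within a constant factor of what the source procedure already pays for; this comes down to the loop bounds $x\#(w\times y)$ and $x_1\times x_3$ being taken tightly enough that the ``padding'' iterations — those in which every guard is false except the \texttt{gt}-bookkeeping one — are dominated, up to a constant, by the ``useful'' iterations, which are in bijection with genuine iterations of a source loop. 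Getting this estimate right, together with the accompanying off-by-one checks on the super-round lengths, is the delicate part; everything else is assembly.
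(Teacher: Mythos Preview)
Your per-pattern decomposition handles the semantic-preservation half cleanly and in the same spirit as the paper, which dismisses that half in one line. The time-complexity half is where the plan breaks: the announced constant-factor blowup for a single \emph{Unnest} step is not true in general, and this is exactly the point you flag as ``delicate'' without actually resolving. The merged loop iterates $\taille{x\#(w\times y)}\approx\taille{x}\cdot(\taille{w}+\taille{y})$ times, while the source nest spends $\Theta(\taille{x}\cdot\taille{y})$ on inner iterations; nothing in BTLP ties $\taille{w}$ to $\taille{y}$, so the padding iterations (those that only decrement \texttt{gt}) can outnumber the useful ones by an input-dependent, not constant, factor. Concretely, a procedure that fixes $y:=1$, sets $w:=v_1$, and runs $\textbf{Loop }v_1\textbf{ with }w\ \{\textbf{Loop }y\textbf{ with }w\ \{\ldots\}\}$ has source cost $\Theta(m)$ on a size-$m$ input but $\Theta(m^2)$ merged iterations after one \emph{Unnest}. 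Composing ``constantly many constant-factor steps'' therefore does not deliver $O(t(P))$.

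The paper does not argue step by step. For the time bound it works globally: after remarking that the general case goes by structural induction on $\transform{P}{}$, it analyses the end product of applying \emph{Unnest} $n{-}1$ times to a depth-$n$ loop tower with guards $x_1,\ldots,x_n$ and bounds $x_{n+1},\ldots,x_{2n}$. The single surviving counter is $z=x_1\#\bigl((x_2\times x_{n+1})\#\cdots\#(x_n\times x_{2n})\bigr)$, giving $\taille{z}\le\taille{x_1}\prod_{k=2}^{n}(\taille{x_k}+\taille{x_{n+k}})\le 2^{n-1}m^n$ once every $\taille{x_i}\le m$, and this is compared directly with the worst-case $t(P)(m)=m^n$; the constant $2^{n-1}$ depends only on the nesting depth, hence on $P$. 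The missing ingredient in your plan is this global accounting: rather than bounding each \emph{Unnest} individually, one tracks a closed form for the emerging loop counter through all applications and compares \emph{that} against the worst-case $t(P)(m)$.
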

\begin{proof}
In the first equality, the computed functions are the same since the program transformation preserves the extensional properties. For the second equality, the general case can be treated by a structural induction on the procedure $\transform{P}{}$. For simplicity, we consider the case of a procedure $\transform{P}{}$ only involving $n$ nested loops over guard variables $x_1,\ldots,x_n$ and loop bounds $x_{n+1},\ldots,x_{2n}$, respectively over one single basic instruction (\textsl{e.g.} one assignment with no procedure call).  With inputs of size $m$, this procedure will have a worst case complexity of $m^n$ (when the loop bounds are not reached). Consequently, $t(P)(m)=m^n$. This procedure will be transformed into a flat IBTLP procedure using the \emph{Unnest} transformation $n-1$ times over a variable $z$ containing the result of the computation $x_1 \# ((x_2\times x_{n+1}) \# ( \ldots x_3 \times x_{n+2})\# (x_n \times x_{2n} )\ldots)$ as initial value. Consequently, on an input of size $m$, we have $t(\overline{\transform{P}{}})(m) \leq \taille{z}= \taille{x_1} \times (\taille{x_2}+ \taille{x_{n+1}})\times \ldots \times (\taille{x_n}+\taille{x_{2n}}) \leq 2^{n-1} \times m^n =O(m^n)$, as for each $i$ $\taille{x_i} \leq m$ and by definition of the $\#$ and $\times$ operators.  
We conclude using Lemma~\ref{8}.
\end{proof}

\subsubsection{From Flat IBTLP to Local and Flat IBTLP}
Now we describe the last program transformation $[ - ]_X$ that makes the check bound performed on instruction annotations explicit. For that purpose, $[ - ]_X$ makes use of the operator $\ckbd$ and records a set of variables $X$ (the annotations enclosing the considered instruction). The procedure is described in Figure~\ref{local}.
\begin{figure*}[t]
\hrule
\vspace*{0.3cm}
  \begin{align*}
[   v^{\tau_1 \times \ldots \times \tau_n \to \D}( v_1^{\tau_1},\ldots,v_n^{\tau_n}) IP^*\ IV\ II^* \textbf{ ret } v_r^\D ]_X
&= v^{\tau_1 \times \ldots \times \tau_n \to \D}( v_1^{\tau_1},\ldots,v_n^{\tau_n}) IP^*\ IV\ [II]_X^* \textbf{ ret } v_r \\
[II_v]_X &= [II]_{X \cup\{v\}}\\
[\textbf{if}\ v^\D\ \textbf{\{}\ II_1^*\ \textbf{\} else \{}\ II_2^*\textbf{\}}]_X &= \textbf{if}\ v^\D\ \textbf{\{}\ [II_1]_X^*\ \textbf{\} else \{}\ [II_2]_X^* \textbf{\}}\\
[\textbf{loop } v_0^\D \textbf{\{}II^*\textbf{\}};]_X &= \textbf{loop } v_0^\D \textbf{ \{}[II]_X^* \textbf{\}}  \\
[v^\D := IE;]_X &= v^\D := \ckbd(IE^\D, X); 
\end{align*}
\hrule
\caption{From Flat IBTLP to Local and Flat IBTLP}
\label{local}
\end{figure*}
Notice that the semantics condition ensuring that no assignment must be performed on a computed value of size greater than the size of the loop bounds for a BTLP procedure or than the size of the loop annotations for a IBTLP procedure has been replaced by a local computation performing exactly the same check using the operator $\ckbd$. Consequently, we have:
\begin{lem}\label{fclbtlp3}
For any BTLP procedure $P$, we have $\sem{[\overline{\transform{P}{}}]_\emptyset} = \sem{P}$ and  $t([\overline{\transform{P}{}}]_\emptyset) = O(t(P))$.
\end{lem}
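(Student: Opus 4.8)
The plan is to reduce the statement to the single claim that the local transformation $[-]_\emptyset$ preserves both the computed function and the time complexity (the latter up to a multiplicative constant), and then to chain this with Lemma~\ref{fclbtlp2}. Indeed, Lemma~\ref{fclbtlp2} already gives $\sem{\overline{\transform{P}{}}} = \sem{P}$ and $t(\overline{\transform{P}{}}) = O(t(P))$, so it suffices to prove $\sem{[\overline{\transform{P}{}}]_\emptyset} = \sem{\overline{\transform{P}{}}}$ and $t([\overline{\transform{P}{}}]_\emptyset) = O(t(\overline{\transform{P}{}}))$, after which the equalities compose and the two $O$'s compose.

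For the complexity part, I would observe that $[-]_X$ sets up a one-to-one correspondence between the assignments of the source procedure and those of the target. Inspecting Figure~\ref{local}: each assignment $v^\D := IE;$ is rewritten to the single assignment $v^\D := \ckbd(IE^\D,X);$, conditionals and loops are traversed homomorphically, and an annotation $(II^*)_v$ is simply erased while $v$ is pushed into the parameter set $X$. No instruction is duplicated and no loop is created or removed, so on every input the number of assignments executed by $[\overline{\transform{P}{}}]_\emptyset$ equals that executed by $\overline{\transform{P}{}}$; moreover evaluating $\ckbd(IE,X)$ triggers no assignment, since expressions $IE$ contain no instructions, and $\ckbd$ itself is polynomial time (as already noted it lies in {\sff}$_1$ whenever its expression argument does). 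Hence $t([\overline{\transform{P}{}}]_\emptyset) = t(\overline{\transform{P}{}}) = O(t(P))$.

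For the semantics part, the argument is a structural induction on instructions carried out with a strengthened statement that tracks the enclosing annotations. Concretely, I would prove: for every IBTLP instruction sequence $II^*$, every store $\sigma$ and every finite set of variables $X$, running $[II]_X^*$ from $\sigma$ under the standard (no implicit bound check) semantics yields the same final store as running $II^*$ from $\sigma$ under the IBTLP semantics in which every assignment is subject to the bound check against the sizes of the current values of the variables in $X$. The base case is the assignment $v^\D := IE;$: the IBTLP semantics assigns $\sem{IE}$ if its size passes the check against $X$ and assigns $0$ otherwise, which is exactly the value $\ckbd(\sem{IE},X)$; here one uses that annotation (loop-bound) variables are never assigned to inside the instructions they annotate — a well-formedness constraint inherited from BTLP — so the values of the variables of $X$ seen by $\ckbd$ coincide with those used by the implicit check. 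The conditional and loop cases follow directly from the induction hypothesis (the loop guard is likewise unmodifiable, so iteration counts match), and the annotation case $[(II^*)_v]_X = [II^*]_{X \cup \{v\}}$ is literally the induction hypothesis applied to $X \cup \{v\}$. Instantiating with $X = \emptyset$ and the initial store gives $\sem{[\overline{\transform{P}{}}]_\emptyset} = \sem{\overline{\transform{P}{}}}$, and combining with Lemma~\ref{fclbtlp2} yields the claim.

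The delicate point, and the only real work, is getting that strengthened induction statement exactly right: one must fix the precise meaning of the IBTLP bound check when several annotations are nested so that it agrees with the reading of $\taille{\sem{E}} \le \taille{x},\ x \in X$ built into the definition of $\ckbd$, and one must verify that non-reassignment of annotation variables really is guaranteed by well-formedness (otherwise the value $\ckbd$ inspects could drift from the value the implicit check would have used). Once this bookkeeping is pinned down, every inductive case is routine, and the complexity bound is immediate from the one-to-one correspondence on assignments together with Lemma~\ref{fclbtlp2}.
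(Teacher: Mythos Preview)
Your proposal is correct and takes the same approach as the paper: reduce to Lemma~\ref{fclbtlp2} and observe that $[-]_\emptyset$ preserves both the computed function and the assignment count. The paper's own proof is a one-sentence appeal to Lemma~\ref{fclbtlp2} together with the assertion that ``extensionality and complexity are both preserved''; your structural induction simply spells out what the paper leaves implicit.
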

\begin{proof}
We use Lemma~\ref{fclbtlp2} together with the fact that extensionality and complexity are both preserved.
\end{proof}

\subsubsection{From Local and Flat IBTLP to terms}
\newcommand{\ourlang}{Esperanto}
We then encode Flat and Local IBTLP in our functional language. For this, we define a procedure $\comp$ that will ``compile'' IBTLP procedures into terms.
For that purpose, we suppose that the target term language includes constructors for numbers ($0$ and $1$), a constructor for tuples $\langle \ldots \rangle$, all the operators of IBTLP as basic prefix operators (+, -, $\#$, $\ldots$), min operators $\sqcap^\D_n$ computing the $\min$ of the sizes of $n$ dyadic numbers and a $\ckbd$ operator of arity 2 such that $\sem{\ckbd}(\M,\N)=\sem{\M}$ if $\taille{\sem{\M}} \leq \taille{\sem{\N}}$ (and $0$ otherwise). All these operators are extended to be total functions in the term language: they return $0$ on input terms for which they are undefined. Moreover, we also require that the Flat and Local IBTLP procedures given as input are alpha renamed so that all parameters and local variables of a given procedure have the same name and are indexed by natural numbers.
The compiling process is described in Figure~\ref{comp}, defining $\pcmp, \icmp^n, \ecmp\ $that respectively compile procedures, instructions and expressions. The $\icmp$ compiling process is indexed by the number of variables in the program $n$. 

For convenience, let $\lambda \langle v_1, \ldots, v_n \rangle.$ be a shorthand notation for
$\lambda s. {\tt case}\ s\ {\tt of}{\ \langle v_1, \ldots, v_n\rangle}\to$ and let $\pi_r^n$ be a shorthand notation for the r-th projection $\lambda \langle v_1, \ldots, v_n \rangle.v_r$.

\begin{figure*}[t]
\small
\hrule
\vspace*{0.3cm}
  \begin{align*}
 &\pcmp( v( v_1,\ldots,v_m)\ IP^*\ \textbf{var}\ v_{m+1},\ldots,v_n; \ II^*\textbf{ ret } v_r) = \lambda s.\pi_r^{n}( \icmp^{n}(II^*)\ s)\\
&\icmp^n(II_1 \ \ldots \ II_k) = \icmp^n(II_k)\ \ldots \ \icmp^n(II_1)\\
&\icmp^n(v_i := IE;) =  \lambda {\langle v_1, \ldots, v_n\rangle}.{\langle v_1,\ldots, v_{i-1}, \ecmp(IE), v_{i+1},\ldots,v_n \rangle} \\
&\icmp^n(\textbf{loop}\ v_i\ \textbf{\{}II^*\textbf{\}}) = \lambda \langle v_1,\ldots, v_n\rangle .(\left(\begin{array}{rcl}\letrec{f}{\lambda \tilde{t}.\lambda s .  \ocase\ \tilde{t}\ {\tt of}\ 0 &\to & s\\
j(t) &\to & f\ t\ (\icmp^n(II^*)\ s)\\
}
\end{array}
\right)v_i)\\
&\icmp^n(\textbf{if}\ v_i^\D\ \textbf{\{}\ II_1^*\ \textbf{\} else \{}\ II_2^*\textbf{\}}) = \lambda \langle v_1, \ldots, v_n \rangle. {\tt case}\ v_i\ {\tt of }\ 0 \to \icmp^n(II_2^*) \ | \ j(t) \to \icmp^n(II_1^*)\\
& \text{ with } j \in \{0,1\}\\
&\ecmp(c) = c, \quad c \in \{1,v^\D\}\\
&\ecmp(v_0^\D\ \op\ v_1^\D) = \op\ v_0\ v_1, \quad \op \in \{+,-,\#,\times\}\\
&\ecmp(\cut(v^\D)) = \cut \ v\\
&\ecmp(\ckbd(IE, \{v_{j_1},\ldots,v{j_r}\})) = \ckbd\ \ecmp(IE)\ (\sqcap^\D_r \ v_{j_1}\ \ldots\ v_{j_r}) \\
 &\ecmp(v(IA_1,\ldots,IA_n))=v \ \ecmp(IA_1)\ \ldots \ \ecmp(IA_n)\\
&\ecmp(\lambda v_1,\ldots,v_n.v(v'_1\ldots,v'_m))=\lambda v_1.\ldots \lambda v_n.(\ldots(v\ v'_1)\ \ldots\ v'_m)\ldots) 
\end{align*}
\hrule
\caption{Compiler from Local and Flat IBTLP to terms}
\label{comp}
\end{figure*}

The compilation procedure works as follows: any Local and Flat IBTLP procedure of the shape $v( v_1,\ldots,v_m)\ IP^*\ \textbf{var}\ v_{m+1},\ldots,v_n;\ II^*\ \textbf{ ret } v_r$ will be transformed into a term of type $\tau_1 \times \ldots \times \tau_n \to \tau_r$, provided that $\tau_i$ is the type of $v_i$ and that $\tau_1 \times \ldots \times \tau_n$ is the type for n-ary tuples of the shape $\langle v_1,\ldots,v_n \rangle$. Each instruction within a procedure of type $\tau_1 \times \ldots \times \tau_n \to \tau_r$ will have type $\tau_1 \times \ldots \times \tau_n \to \tau_1 \times \ldots \times \tau_n$. Consequently, two sequential instructions $II_1 \ II_2$ within a procedure of $n$ variables will be compiled into a term application of the shape $\icmp^n(II_2)\ \icmp^n(II_1)$ and instruction type is preserved by composition. Each assignment of the shape $v_i := IE;$ is compiled into a term that takes a tuple as input and returns the identity but on the i-th component. The i-th component is replaced by the term obtained after compilation of $IE$ on which a checkbound is performed. The min operator applied for this checkbound is $\sqcap^{\D}_n$ whenever $X$ is of cardinality $n$. The compilation process for expressions is quite standard: each construct is replaced by the corresponding construct in the target language. Notice that the compiling procedure is very simple for procedures as it only applies to Flat IBTLP procedures on which any procedure call has been removed by unfolding. 
The only difficulty to face is for loop compilation: we make use of a {\tt letRec} of type $\D \to \tau_1 \times \ldots \times \tau_n \to \tau_1 \times \ldots \times \tau_n$. The first argument is a counter and is fed with a copy of the loop counter $v_i$ so that the obtained term has the expected type $\tau_1 \times \ldots \times \tau_n \to \tau_1 \times \ldots \times \tau_n$.

Again, for a given term $\M$ of type ${\tau_1} \ldots \longrightarrow \ldots {\tau_n} \longrightarrow \tau$, we define its time complexity $t(\M)$ to be a function of type $\bN \to \bN$ that, for any inputs of type $\tau_i$ $\vec{\N}$ and size bounded by $m$ returns the maximal value of $\taille{\M\ \vec{\N} \Rightarrow \sem{\M\ \vec{\N}}}$. In other words, $t(\M)(m)=\max_{\taille{\vec{\N}} \leq m}\{\taille{\M\ \vec{\N} \Rightarrow \sem{\M\ \vec{\N}}}\}$.
\begin{lem}\label{fclbtlp4}
For any BTLP procedure $P$, we have $ \sem{\pcmp([\overline{\transform{P}{}}]_\emptyset)} = \sem{P}$ and \\
$ t(\pcmp([\overline{\transform{P}{}}]_\emptyset)) = O(t(P))$.
\end{lem}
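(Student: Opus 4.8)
The plan is to reduce the statement to a property of the compiler $\pcmp$ alone. By Lemma~\ref{fclbtlp1} the procedure $[\overline{\transform{P}{}}]_\emptyset$ is a \emph{local and flat} IBTLP procedure, and by Lemma~\ref{fclbtlp3} it computes $\sem{P}$ and satisfies $t([\overline{\transform{P}{}}]_\emptyset) = O(t(P))$. So it is enough to prove, for every local and flat IBTLP procedure $IP$, that (i) $\sem{\pcmp(IP)} = \sem{IP}$ and (ii) $t(\pcmp(IP)) = O(t(IP))$; both are shown by structural induction on $IP$ via a state-simulation invariant, and the statement then follows by chaining these equalities with Lemma~\ref{fclbtlp3}.

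For (i) I would encode a configuration of an $n$-variable procedure — a valuation $\sigma$ of $v_1,\ldots,v_n$ by dyadic numbers — as the tuple term $\langle \sigma(v_1),\ldots,\sigma(v_n)\rangle$, and prove by induction on the instruction grammar of Figure~\ref{comp} that $\icmp^n(II)$ applied to the encoding of $\sigma$ evaluates, under the leftmost outermost strategy, to the encoding of the configuration $\sigma'$ obtained by running $II$ from $\sigma$ in the IBTLP semantics. The base case is an assignment $v_i := \ckbd(IE,X)$: one checks that $\ecmp$ compiles expressions soundly (each operator symbol is sent to the corresponding total term operator with the same semantics, and a higher-order argument $\lambda v_1,\ldots,v_n.v(v'_1,\ldots,v'_m)$ to the corresponding abstraction), so that $\sem{\ecmp(IE)}$ equals $\sem{IE}$ in $\sigma$; and that $\ckbd\ \ecmp(IE)\ (\sqcap^\D_r\ v_{j_1}\ \ldots\ v_{j_r})$ returns $\sem{IE}$ exactly when $\taille{\sem{IE}}$ is at most $\min_k \taille{\sigma(v_{j_k})}$, i.e. at most $\taille{\sigma(x)}$ for all $x \in X$, which is precisely the IBTLP bound check, and $0$ otherwise. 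Sequencing is the composition of the state-transformers (since $\icmp^n(II_1\,II_2)=\icmp^n(II_2)\ \icmp^n(II_1)$), the conditional is a $\ocase$ on whether $v_i$ is $0$ or carries a leading dyadic digit, and the loop $\textbf{loop}\ v_i\ \{II^*\}$ is handled by the $\oletrec$ term of Figure~\ref{comp}, whose counter argument receives the copy of $v_i$ and is consumed one dyadic digit per recursive unfolding, so the body $\icmp^n(II^*)$ is iterated exactly $\taille{\sigma(v_i)}$ times, as required by the loop semantics. Applying this to the top-level instruction list and projecting on the $r$-th component — recall $\pcmp(IP)=\lambda s.\pi_r^n(\icmp^n(II^*)\ s)$ — gives $\sem{\pcmp(IP)}=\sem{IP}$ modulo the evident encoding of the $m$-ary parameter list as an $n$-tuple with the $n-m$ locals initialised to $0$.

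For (ii) I would carry a reduction count through the same induction: running $\icmp^n(II)$ from an encoded configuration costs a constant number of reductions distinct from $\to_\op$ — a destructuring $\ocase$, a bounded number of steps rebuilding the tuple, and, for a loop, one $\oletrec$ unfolding, one application and one $\ocase$ per iteration — plus the cost of the body instructions actually executed; expression evaluation contributes only $\to_\op$ steps and a constant number of structural steps, so it does not grow with the input. Since $[\overline{\transform{P}{}}]_\emptyset$ is flat and call-free by Lemma~\ref{fclbtlp1} (in particular it has at most one loop), on any fixed input the number of control and structural reductions is bounded by a constant — depending only on the size of the procedure — times the number of assignments executed on that input, plus a constant; maximising over inputs of a given size yields $t(\pcmp(IP)) = O(t(IP))$, hence $O(t(P))$ by Lemma~\ref{fclbtlp3}. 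Combining with $\sem{\pcmp(IP)} = \sem{IP} = \sem{P}$ finishes the argument.

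The step I expect to be the main obstacle is the loop case of the invariant: one must verify simultaneously that the $\oletrec$ term performs exactly $\taille{v_i}$ iterations, that each iteration threads a possibly bound-violating assignment correctly — so that the local $\ckbd$ checks inside the body reproduce the global loop-bound discipline of the source BTLP program — and that each iteration takes only $O(1)$ reductions under leftmost outermost evaluation. A secondary subtlety is that the higher-order function arguments of the procedure (applied in the term through $\ecmp(v(IA_1,\ldots,IA_n))=v\ \ecmp(IA_1)\ \ldots\ \ecmp(IA_n)$) must be passed on unchanged, so that the simulation matches the oracle-call semantics of IBTLP exactly.
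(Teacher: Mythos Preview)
Your proposal is correct and follows essentially the same route as the paper: reduce to Lemma~\ref{fclbtlp3}, then argue that the compiler preserves semantics and complexity by a structural walk over the instruction grammar, with the assignment case costing a constant number of reductions and the loop case costing $\taille{v_i}$ iterations of constant cost each. The paper's own proof is much terser---it merely asserts that an assignment takes one $\beta$-reduction plus one $\ocase$ plus a fixed number of expression steps, and that the $\oletrec$ iterates $\taille{v_i}$ times because each recursive call removes one symbol---whereas you spell out the state-simulation invariant and the tuple encoding explicitly; but the underlying argument is the same.
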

\begin{proof}
The term obtained by the compilation process is designed to compute the same function as the one computed by the initial procedure. It remains to check that for a given instruction the number of reductions remains in $O(t(P))$. This is clearly the case for an assignment as it consists in performing one beta-reduction, one case reduction, and the evaluation of a fixed number of symbols within the expression to evaluate. For an iteration the complexity of the {\tt letRec} call is in $\taille{v_i}$ as for an IBTLP loop. Indeed each recursive call consists in removing one symbol. Finally, it remains to apply lemma~\ref{fclbtlp3}.
\end{proof}

\begin{exa}
Let us take a simple BTLP procedure:
\begin{lstlisting}
Procedure mult(x,y)
Var z,b; z := $0$; b := x#y;
Loop x with b do{z := z+y;}
Return z
\end{lstlisting}
It will be translated in Local and Flat IBTLP as:
\begin{lstlisting}
mult(x,y){
    var z,b; z := $0$; b := x#y;
    loop x {z := chkbd(z+y,{b});}
    ret z}
\end{lstlisting}
And be compiled modulo $\alpha$-equivalence in a term as:
$$
\lambda s .\pi_3^4\left(\begin{array}{l}

  \lambda \langle \texttt{x}, \texttt{y}, \texttt{z}, \texttt{b} \rangle .
    \left(\letrec{f}{
        \begin{array}{rcl}
\lambda \tilde{t}.\lambda \tilde{s}. \texttt{case } \tilde{t} \texttt{ of } 0 &\to& s\\
              j(t) &\to& f\ t\ (\M\ \tilde{s})
      )\end{array}}
   \right) \texttt{x} \\
 ( \lambda \langle \texttt{x}, \texttt{y}, \texttt{z}, \texttt{b} \rangle . \langle \texttt{x}, \texttt{y}, \texttt{z}, \#\ \texttt{x}\ \texttt{y} \rangle (\lambda \langle \texttt{x}, \texttt{y}, \texttt{z}, \texttt{b} \rangle . \langle \texttt{x}, \texttt{y}, 0, b\rangle \ s))    
\end{array}
\right),
$$
where $\M = \lambda \langle \texttt{x}, \texttt{y}, \texttt{z}, \texttt{b} \rangle .\langle \texttt{x}, \texttt{y}, \ckbd\ (+\ \texttt{z}\ \texttt{y})\ (\sqcap_1^\D\ \texttt{b}), \texttt{b} \rangle .$
\end{exa}

Now it remains to check that any term obtained during the compilation procedure has a polynomial interpretation.
In order to show this result, we first demonstrate some intermediate results:
\begin{lem}\label{si}
Given an assignment $\rho$, the operators $+$, $-$, $\#$, $\times$, $\cut$, $\sqcap^\D_n$ and $\ckbd$ admit
a polynomial sup-interpretation.
\end{lem}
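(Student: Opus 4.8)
The plan is to exhibit, for each operator, an explicit higher order polynomial that is a valid sup-interpretation in the sense of Definition~\ref{def:inter}, i.e.\ a non decreasing total function $\interpone{\op}$ satisfying $\interpone{\op\ \M_1\ \ldots\ \M_n} \geq \interpone{\sem{\op}(\M_1,\ldots,\M_n)}$ for all arguments. Since all these operators act on the basic type $\D$, it suffices to find, for each $\op$ of arity $n$, a polynomial $P_\op$ over $\cN$ such that whenever $|\sem{\M_i}| \leq X_i$ (equivalently $\interpone{\M_i} = |\sem{\M_i}|$ on values, by Lemma~\ref{lem7}), the size of the output $\sem{\op}(\vec{\M})$ is at most $P_\op(X_1,\ldots,X_n)$. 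The heart of the matter is therefore a size analysis of each operator's output in terms of the dyadic sizes of its inputs.

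First I would dispatch the ``size-shrinking'' operators: for $\cut$, removing the first character gives $|\cut(x)| \leq |x|$, so $\interpone{\cut} = \Lambda X.X$ works; for $-$ (proper subtraction), $|x - y| \leq |x|$, so $\interpone{-} = \Lambda X.\Lambda Y.X$; for $\sqcap^\D_n$, the output is the minimum of the sizes, hence bounded by $|v_{j_1}|$, so $\sqcap^\D_n$ is interpreted by the projection $\Lambda X_1.\ldots.\Lambda X_n.X_1$ (or a sum, to keep things uniform). For $+$, $|x+y| \leq \max(|x|,|y|)+1 \leq |x|+|y|+1$, so $\interpone{+} = \Lambda X.\Lambda Y.(X+Y+1)$. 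For the two ``blow-up'' operators, the key observation is that although $x\#y$ and $x\times y$ are numerically exponential, their \emph{sizes} are polynomial in the input sizes: $|x \# y| = |2^{|x|\cdot|y|}| = |x|\cdot|y|+1$ and $|x \times y| = |2^{|x|+|y|}| = |x|+|y|+1$, so $\interpone{\#} = \Lambda X.\Lambda Y.(X\times Y+1)$ and $\interpone{\times} = \Lambda X.\Lambda Y.(X+Y+1)$ are polynomial sup-interpretations. Finally, for $\ckbd$: by its defining equation the output is either $\sem{E}$ when $|\sem{E}| \leq |x|$ for some $x \in X$, or $0$ otherwise; in the first case the output size is bounded by the input size $|\sem{E}|$, and in the second it is $1$, so $\interpone{\ckbd} = \Lambda X.\Lambda Y.(X+1)$ is a valid non decreasing polynomial sup-interpretation (here $X$ bounds $|\sem{E}|$ and $Y$ bounds the min of the sizes of the variables in the annotation set).

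In each case I would then check the two side conditions required by Definition~\ref{def:inter}: that the proposed interpretation is a \emph{total non decreasing} function on $\cN$ — which is immediate since sums, products, projections and constants are non decreasing, and the conventions $\top + n = \top$, $\top \times n = \top$ extend this to $\cN$ — and that it is a \emph{higher order polynomial} in the sense of Definition~\ref{def:hop}, which is clear as each is built from constants, variables, $+$, $\times$, application and abstraction. The inequality $\interpone{\op\ \vec{\M}} \geq \interpone{\sem{\op}(\vec{\M})}$ then follows from the size bounds above together with Lemma~\ref{lem7} (on values, $\interp{\vone} = |\vone|$) and monotonicity, recalling that operators are total on all terms and return $0$ where undefined, so the size bound $|\sem{\op}(\vec{\M})| \leq P_\op(\ldots)$ holds unconditionally.

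The only genuinely delicate point — and the one I would present carefully rather than in passing — is the $\#$ and $\times$ operators, where one must resist the naive reading that an exponential value forces an exponential interpretation: the correct statement is that the sup-interpretation bounds the \emph{size} of the computed term, and the smash/times functions were designed precisely so that output size stays polynomial in input size. Everything else is a routine verification of monotonicity and of membership in the grammar of Definition~\ref{def:hop}.
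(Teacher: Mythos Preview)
Your proof is correct and follows the same strategy as the paper: exhibit an explicit polynomial for each operator and justify it by a size analysis, invoking Lemma~\ref{lem7} to translate between interpretations and sizes on values. The paper's chosen interpretations are essentially yours modulo harmless additive constants (e.g.\ it takes $\interpone{+}=\Lambda X.\Lambda Y.(X+Y)$, $\interpone{\#}=\Lambda X.\Lambda Y.(X\times Y)$, $\interpone{\cut}=\Lambda X.\sqcup(X-1,0)$), with one exception worth flagging.

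For $\ckbd$ the paper sets $\interpone{\ckbd} = \Lambda X.\Lambda Y.\,Y$, bounding the output by the \emph{second} argument (the loop-bound side), whereas you set $\Lambda X.\Lambda Y.(X+1)$, bounding it by the expression side. Both are valid sup-interpretations, so your proof of the present lemma stands. But the paper's choice is not incidental: it is exactly what makes the downstream completeness argument (Theorem~\ref{th:comp}) go through. In the compiled term every loop-body assignment becomes $\ckbd\ \ecmp(IE)\ (\sqcap^\D_r\ v_{j_1}\ldots v_{j_r})$; with the paper's interpretation this collapses to the interpretation of the fixed loop-bound variables, so the state tuple's interpretation is uniformly controlled across iterations and the \texttt{letRec} inequality admits a polynomial solution. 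With your choice the bound is $\interpone{\ecmp(IE)}+1$, which depends on the current state and can grow across iterations (e.g.\ when $IE$ involves $\#$), obstructing a polynomial fixpoint. So your interpretation proves this lemma but would have to be swapped for the paper's before the loop analysis.
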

\begin{proof}
We can check that the following are polynomial sup-interpretations:
\begin{align*}
\interpone{+} &= \Lambda X.\Lambda Y. (X+Y),\\
\interpone{-} &= \Lambda X.\Lambda Y. X,\\
\interpone{\#} &= \Lambda X.\Lambda Y. (X \times Y),\\
\interpone{\times} &= \Lambda X.\Lambda Y. (X + Y),\\
\interpone{\cut} &= \Lambda X.\sqcup(X-1,0),\\
\interpone{{\sqcap^\D_n}} &= \Lambda X_1.\ldots.\Lambda X_n. \sqcap^{\cN}(X_1,\ldots,X_n),\\
\interpone{{\ckbd}} &= \Lambda X.\Lambda Y. Y.
\end{align*}
The inequalities are straightforward for the basic operators $+,-,\#,\ldots$.
For $ \interpone{{\ckbd}}$, to be a sup-interpretation, we have to check that:
$$\forall \M,\forall \N,\, \interpone{{\ckbd}\ \M \ \N} \geq \interpone{\sem{\ckbd}(\M,\N)}.$$ 
Indeed,
\begin{align*}
 \interpone{{\ckbd}\ \M \  \N} &=\interpone{\ckbd}\ \interpone{\M}\ \interpone{\N}\\
 &=(\Lambda X.\Lambda Y. Y) \ \interpone{\M}\ \interpone{\N}\\
 &= \interpone{\N}\\
&\geq \interpone{\sem{\N}} &&\text{(By Corollary}~\ref{coro1}) \\
&\geq \taille{\sem{\N}} &&\text{(By Lemma}~\ref{lem7}) \\
&\geq \taille{\sem{\ckbd}(\M,\N)} &&\text{(By Definition of } \ckbd)\\
&\geq \interpone{\sem{\ckbd}(\M,\N)} &&\text{(By Lemma}~\ref{lem7})
\end{align*}
and so the conclusion. 

In the particular case where $\sem{\N}$ is not a value, we  have $\interpone{\sem{\ckbd}(\M,\N)}=\interpone{0}=1$ so the inequality is preserved.
\end{proof}

Now we are able to provide the interpretation of each term encoding an expression:
\begin{cor}\label{coro3}
Given a BTLP procedure $P$, any term $\M$ obtained by compiling an expression $IE$ (\textsl{i.e.} $\M=\ecmp( IE)$) of the flat and local IBTLP procedure $[\overline{\transform{P}{}}]_\emptyset$ admits a polynomial interpretation $\interpone{\M}$.
\end{cor}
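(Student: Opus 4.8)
The plan is to prove this corollary by structural induction on the expression $IE$, following the grammar of $IE$ in Figure~\ref{lbtlp} and the definition of $\ecmp$ in Figure~\ref{comp}. Two preliminary conventions fix the set-up. First, having established Lemma~\ref{si}, I fix once and for all the interpretation $\interpone{\op}$ of every operator $\op \in \{+,-,\#,\times,\cut,\sqcap^\D_n,\ckbd\}$ to be the polynomial sup-interpretation exhibited there. Second, as in the definition of $\FP_i$, I read ``$\M$ admits a polynomial interpretation'' as ``$\interpone{\M}$ is bounded by a higher order polynomial in the sense of Definition~\ref{def:hop}'', whose free variables are the images under $\rho$ of the free program variables of $\M$; this reading is what lets the operators whose chosen interpretations involve $\sqcup$ or $\sqcap$ go through, since $\sqcup(X-1,0)\le X$ and $\sqcap^{\cN}(X_1,\dots,X_r)\le X_1$ are polynomially bounded even though $\sqcup,\sqcap$ are not literal constructors of the grammar.

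For the base and first-order cases I would argue directly from Definition~\ref{def:inter}. The constants $\ecmp(0)$ and $\ecmp(1)$ are nullary constructors, interpreted by the constant $1$; $\ecmp(v^\D)=v$ is interpreted by $\rho(v)$, a variable of type $\cN$, hence an order-$0$ polynomial. For $\ecmp(v_0^\D\ \op\ v_1^\D)=\op\ v_0\ v_1$ with $\op\in\{+,-,\#,\times\}$ and for $\ecmp(\cut(v^\D))=\cut\ v$, Definition~\ref{def:inter} yields $\interpone{\op}\,\rho(v_0)\,\rho(v_1)$ and $\interpone{\cut}\,\rho(v)$, and substituting the functions from Lemma~\ref{si} gives in each case something bounded by an order-$0$ polynomial in the relevant $\rho(v_i)$. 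The case $\ecmp(\ckbd(IE,\{v_{j_1},\dots,v_{j_r}\}))=\ckbd\ \ecmp(IE)\ (\sqcap^\D_r\,v_{j_1}\dots v_{j_r})$ is the one place where $IE$ occurs as a subexpression, yet the induction hypothesis is not needed: since $\interpone{\ckbd}=\Lambda X.\Lambda Y.Y$, Definition~\ref{def:inter} collapses the whole interpretation to $\interpone{\sqcap^\D_r\,v_{j_1}\dots v_{j_r}}=\sqcap^{\cN}(\rho(v_{j_1}),\dots,\rho(v_{j_r}))\le\rho(v_{j_1})$, which is polynomial (the degenerate case $r=0$, i.e.\ $X=\emptyset$, reduces by construction to $\ecmp(IE)$, which is covered by the induction hypothesis).

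The only genuinely higher-order case is an application $\ecmp(v(IA_1,\dots,IA_n))=v\ \ecmp(IA_1)\dots\ecmp(IA_n)$, and this is also where the proof requires care rather than bookkeeping. I would first invoke the fact that $[\overline{\transform{P}{}}]_\emptyset$ is flat, so the Unfold transformation has erased every call to a declared procedure; hence the head $v$ of the application is necessarily a (possibly higher-type) parameter variable, and $\interpone{v}=\rho(v)$ is a \emph{bare variable} $X^A$ of the interpretation domain — this is essential, since for an unfolded procedure body one would have no a priori guarantee of being polynomial in the right variables. Each argument $IA_j$ is either a variable, which falls under the base case, or of the shape $\lambda w_1,\dots,w_k.v'(v'_1,\dots,v'_m)$, whose compilation $\lambda w_1.\dots\lambda w_k.(\dots(v'\ v'_1)\dots v'_m)$ has, by Definition~\ref{def:inter}, an interpretation that is an iterated $\Lambda$-abstraction over the $\rho(w_l)$ of a base-type term obtained by applying the variable $\rho(v')$ to variables and adding the constants coming from the $1\oplus$ decorations; by inspection this lies inside the grammar of Definition~\ref{def:hop}. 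The remaining point, and the main obstacle to state cleanly, is then that this grammar is genuinely closed under the operations produced by $\ecmp$ — $\Lambda$-abstraction, application of a variable to already-constructed polynomials, and addition of constants — so that applying $\rho(v)$ to the $\interpone{\ecmp(IA_j)}$ again yields a higher order polynomial. This closes the induction, and the order of the resulting polynomial is bounded by the order of the type of $IE$, hence by the order of $P$.
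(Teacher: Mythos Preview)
Your proof is correct and follows essentially the same approach as the paper's: both rest on the observation that after unfolding there are no declared-procedure calls left, so every expression is built from variables and the basic operators of Lemma~\ref{si}, and the interpretation of the compiled term is then obtained by finite composition of the polynomial pieces. The paper compresses this into two sentences (``By unfolding, there is no procedure call \ldots\ any symbol of an expression has a polynomial interpretation, obtained by finite composition of polynomials''), whereas you spell out the structural induction case by case; in particular, your explicit treatment of the higher-order application $v(IA_1,\dots,IA_n)$ --- noting that the head must be a bare parameter variable and that each $IA_j$ compiles to something literally inside the grammar of Definition~\ref{def:hop} --- makes precise exactly the point the paper leaves implicit.
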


\begin{proof}
By unfolding, there is no procedure call in the expressions of procedure $[\overline{\transform{P}{}}]_\emptyset$. Moreover, by Lemma~\ref{si} and by Definition~\ref{def:inter}, any symbol of an expression has a polynomial interpretation, obtained by finite composition of polynomials.
\end{proof}

\begin{thm}\label{th:comp}
Any BTLP procedure $P$ can be encoded by a term $\M$ computing the same function and admitting a polynomial interpretation.
\end{thm}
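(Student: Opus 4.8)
The term witnessing the theorem is $\M = \pcmp([\overline{\transform{P}{}}]_\emptyset)$, the end product of the four successive transformations of this section. Lemma~\ref{fclbtlp4} already gives $\sem{\M} = \sem{P}$, so the whole task reduces to showing that $\M$ admits a polynomial interpretation. I would prove this by structural induction on the flat and local IBTLP procedure $[\overline{\transform{P}{}}]_\emptyset$, mirroring the clauses of the compiler in Figure~\ref{comp}, and establishing the stronger statement that each compiled sub-phrase---expression, instruction, instruction block, procedure body---has a polynomial interpretation. Because every interpretation met along the way is obtained from the polynomial sup-interpretations of Lemma~\ref{si} by finitely many applications of $\oplus$, $\sqcup$, $\sqcap$, term application and abstraction, and because all BTLP types in the procedure have order at most $i=\ord(P)$, the resulting interpretation of $\M$ is automatically an order $i$ polynomial.

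The ``flat'' cases are the routine ones. Expressions are handled directly by Corollary~\ref{coro3} (after the $\overline{\cdot}$ phase there are no procedure calls inside expressions, and a higher-order parameter of $P$ contributes only a variable, interpreted by a polynomial variable $X^A$). An assignment $v_i := IE;$ compiles to $\lambda \langle v_1,\ldots,v_n\rangle.\langle v_1,\ldots,\ecmp(IE),\ldots,v_n\rangle$, whose interpretation is a constant plus the sum of the component interpretations of the input tuple with the $i$-th summand replaced by the interpretation of $\ecmp(IE)$, polynomial by Corollary~\ref{coro3}; hence it is polynomial. A block $\icmp^n(II_2)\,\icmp^n(II_1)$ is a term application, so its interpretation is a composition of two polynomials, again a polynomial. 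The conditional compiles to a $\ocase$ with two branches, whose interpretation is $1\oplus$ a bounded $\sqcup$ of the branch interpretations, polynomial as soon as both branches are. Finally the procedure body $\lambda s.\pi_r^n(\icmp^n(II^*)\,s)$ is a projection after $\icmp^n(II^*)$, so it inherits a polynomial interpretation from the block $II^*$.

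The single delicate case is the loop $\textbf{loop}\ v_i\ \{II^*\}$, which compiles to a $\letrec{f}{\ldots}$ whose interpretation is, by Definition~\ref{def:inter}, the least fixpoint of $\Lambda X.\,1\oplus ((\Lambda \interpone{f}.\interpone{\lambda \tilde t.\lambda s.\ocase\ \tilde t\ \ldots})\ X)$. Instead of computing this fixpoint I would exhibit an explicit polynomial $F$ satisfying the defining inequality $F \geq 1\oplus ((\Lambda \interpone{f}.\interpone{\lambda \tilde t.\lambda s.\ocase\ \tilde t\ \ldots})\ F)$; since $\interpone{\letrec{f}{\ldots}}$ is the infimum over all such $F$, it is then $\leq F$. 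Unfolding the $\ocase$ according to Definition~\ref{def:inter} yields a body interpretation of the shape $\Lambda \interpone{f}.\Lambda T.\Lambda S.\, c\oplus T\oplus (S \sqcup (\interpone f\ (T-1)\ (G\ S)))$, where $G=\interpone{\icmp^n(II^*)}$ is polynomial by the induction hypothesis and $T-1$ is truncated at $0$; so $F$ must dominate $\taille{v_i}$-many iterations of $G$ from the incoming state, each charged a fixed polynomial. The crucial observation is that in the \emph{local} procedure every assignment inside the loop is compiled through $\ckbd(-,\{v_{j_1},\ldots,v_{j_r}\})$, whose sup-interpretation from Lemma~\ref{si} is $\Lambda X.\Lambda Y.Y$; composed with $\sqcap^{\D}_r$, this makes the interpretation of every reassigned tuple component equal to $\sqcap^{\cN}$ of the interpretations of the loop annotations $v_{j_1},\ldots,v_{j_r}$, which are fixed polynomials of the inputs, while components that are never assigned keep their entry value. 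Hence along the whole loop the state interpretation stays below a fixed polynomial $Q$ of the inputs, $G$ maps states $\leq Q$ to states $\leq Q$ up to an additive constant, and a candidate such as $F = \Lambda T.\Lambda S.\,(T+1)\times (c\oplus T\oplus (S\sqcup Q))$, suitably curried, can be checked to satisfy the inequality above.

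The main obstacle is precisely this loop case: producing a single polynomial $F$ that is large enough to dominate one unfolding of the body yet remains polynomial, that is, turning the ``the state never grows past $Q$'' observation into a concrete bound that fits the fixpoint inequality---the other cases are bookkeeping over Figure~\ref{comp}. Once the loop case is in place the induction closes, and combined with Lemma~\ref{fclbtlp4} it shows that $\M$ computes $\sem{P}$ and admits a polynomial interpretation; together with Theorem~\ref{th:sound} this is what the completeness direction of Theorem~\ref{main} requires.
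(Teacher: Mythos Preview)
Your route diverges from the paper's in an essential way. The paper does \emph{not} argue that the compiled term $\pcmp([\overline{\transform{P}{}}]_\emptyset)$ itself has a polynomial interpretation. It only uses the compilation chain and Lemma~\ref{fclbtlp4} to conclude that the function is computable by a first-order term in polynomial time, then invokes defunctionalization (to get rid of the higher-order components sitting in the state tuple) together with the known completeness of first-order polynomial interpretations for \FPtime to obtain \emph{some} term---possibly quite different from the compiled one---that computes the same function and admits a polynomial interpretation.

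Your structural-induction argument breaks at exactly the loop case you flag as delicate, and the reason is that the interpretation of a tuple is the \emph{sum} $1+\sum_j X_j$ of its components, and the $\ocase$ clause of Definition~\ref{def:inter} then takes a $\sqcup$ over \emph{all} decompositions of the incoming state interpretation $S$ into components $F_1,\ldots,F_n$ with $1+\sum_j F_j \le S$. The fact that, semantically, the loop-annotation variables $v_{j_1},\ldots,v_{j_r}$ hold fixed small values is invisible to the interpretation: nothing prevents the sup from allocating almost all of $S$ to $F_{j_1}$, so $\interpone{\ckbd\ \ecmp(IE)\ (\sqcap^{\D}_r\ v_{j_1}\ldots v_{j_r})}=\sqcap(F_{j_1},\ldots,F_{j_r})$ can be as large as $S-1$. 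Concretely, a single compiled assignment $\lambda\langle v_1,\ldots,v_n\rangle.\langle\ldots,\ckbd(\ldots),\ldots\rangle$ has interpretation bounded below by something like $\Lambda S.\,3S+1$, so the body interpretation $G$ is linear with slope $>1$; feeding this into the $\oletrec$ fixpoint inequality forces $F(T,S)\geq G^{T}(S)$, which is exponential in $T$. Hence no polynomial $F$ can satisfy the defining inequality, and the compiled loop does not have a polynomially bounded interpretation in general. Your sentence ``$G$ maps states $\le Q$ to states $\le Q$'' is true of the \emph{values}, not of their interpretations.

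In short, the $\ckbd$ trick bounds the run of the program but not the interpretation of this particular encoding; this is precisely why the paper falls back on defunctionalization and the external first-order completeness result rather than interpreting the compiled term directly.
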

\begin{proof}
We have demonstrated that any loop can be encoded by a first order term whose runtime is polynomial in the input size. Each higher order expression in a tuple can be encoded by a first order term using defunctionalization. Consequently, by completeness of first-order polynomial interpretations there exists a term computing the same function and admitting a polynomial interpretation.
\end{proof}

\begin{exa}
We provide a last example to illustrate the expressive power of the presented methodology.  Define the term $\M$ by:
\begin{align*}
\letrec{\funone}{\lambda \funtwo.\lambda \x.\ccase{\x}\  {}\conone(\y,\z) &\to \funtwo\ (\funone\ \funtwo\ \z) },\\
 \nil &\to \nil.
\end{align*}
The interpretation of $\M$ has to satisfy the following inequality:
$$\sqcap \{ F \ | \ F \geq \Lambda G.\Lambda X. 4 \oplus  (1 \sqcup (\sqcup_{X \geq 1}(G( F \ G \ (X-1))))\}.$$
Clearly, this inequality does not admit any polynomial interpretation as it is at least exponential in $X$. Now consider the term $\M\ (\lambda \x.1+(1+(\x))$. The term $\lambda \x.1+(1+(\x))$ can be given the interpretation $\Lambda X. X\oplus 3$. We have to find a function $F$ such that $F\ (\Lambda X. X\oplus 3) \geq \Lambda Y. 4  \oplus (1 \sqcup (\sqcup_{Y \geq 1}(F\ (\Lambda X. X\oplus 3)  \ (Y-1))\oplus 3))$. This inequality is satisfied by the function $F$ such that $F\  (\Lambda X.X\oplus 3)\ Y= (7\times Y) \oplus 4$ and consequently $\M\ (\lambda \x.1+(1+(\x))$ has an interpretation. This highlights the fact that a term may have an interpretation even if some of its subterms might not have any. As expected, any term admitting an interpretation of the shape $\Lambda X. X\oplus \beta$, for some constant $\beta$, will have a polynomial interpretation when applied as first operand of this fold function.
\end{exa}

\section{Conclusion and future work}
This paper has introduced a theory for higher-order interpretations that can be used to deal with higher-order complexity classes. We manage to provide a characterization of the complexity classes \bffi\ introduced in~\cite{IKR02} for any order $i$, with a restriction on their input. For $i=1$, we obtain a characterization of \FPtime and, for $i=2$, we obtain a characterization of \bff\  applied to \FPtime inputs.

This is a novel approach but there are still some important issues to discuss.
\begin{itemize}
\item Synthesis: it is well-known for a long time that the synthesis problem that consists in finding the sup-interpretation of a given term is undecidable in general for first order terms using polynomial interpretations over natural numbers (see~\cite{P13} for a survey). As a consequence this problem is also undecidable for higher order. Some simplification such as defunctionalization and the use of interpretations over the reals can vanish this undecidability trouble.
\item Intensionality: the expressive power of interpretations in terms of captured algorithms (also called intensionality) is as usual their main drawback. As for first order interpretations, a lot of interesting terms computing polynomial time functions will not have any polynomial in\-ter\-pre\-tation, \textsl{i.e.} their interpretation will sometimes be $\top$, although the function will be computed by another algorithm (or term) admitting a finite interpretation. At least, the presented paper has shown that the expressive power of interpretations can be extended to higher order and we have presented some relaxation procedure to infer the interpretation of a term.
\end{itemize}

We now discuss some possible future works.
\begin{itemize}
\item It would be of interest to develop tractable (decidable) techniques to characterize the complexity classes \bffi, at least for $i=2$. Due to the intractability of the synthesis mentioned above, we have no expectation to solve this problem using interpretation methods. However other implicit complexity techniques such as tiering or light logics are candidates for solving this issue.
\item Space issues were not discussed in this paper as there is no complexity theory for higher order polynomial space. However one could be interested in certifying program space properties. In analogy with the usual first order theory, a suitable option could be to consider (possibly non-terminating) terms admitting a non strict polynomial interpretation. By non strict, we mean, for example, that the last rule of Definition~\ref{def:inter}  can be replaced by:
$$ \interpcontext{\letrec{\funone}{\M}}{\rho}=1 \oplus \sqcap\{ F \in {\interp{\T}}  \ | \ F \geq ((\Lambda \interpone{\funone}. \interpone{\M}) \ F)\}.$$
This would correspond to the notion of quasi-interpretation on first order programs~\cite{BMM11}. Termination is lost as the term $\letrec{\funone}{\funone}$ could be interpreted by $1 \oplus \Lambda F.F$. However a result equivalent to Lemma~\ref{lem2} holds: we still keep an upper bound on the interpretation of any derived term (hence on the ``size'' of such a term).
\end{itemize}

\end{document}